\newif\ifproofs
\newcommand{\set}[1]{\{#1\}}
\newcommand{\subst}[2]{\{#1/#2\}}
\newcommand{\hypo}[1]{\text{\color{myred}\bfseries\upshape(#1)}}
\newcommand{\mkkeyword}[1]{\mathtt{\color{myblue}#1}}
\newcommand{\proofcase}[1]{\vskip1ex\noindent\fbox{#1}}
\newcommand{\rproofcase}[1]{\proofcase{\refrule{#1}}}
\newcommand{\eoe}{\hfill{$\blacksquare$}}
\newcommand{\defrule}[1]{%
  \hypertarget{rule:#1}{%
    \text{\small[\textsc{#1}]}%
  }%
}
\newcommand{\refrule}[1]{%
  \hyperlink{rule:#1}{%
    \text{\small[\textsc{#1}]}%
  }%
}
\definecolor{mymagenta}{rgb}{0.5,0,0.5}
\definecolor{mygreen}{rgb}{0,0.4,0}
\definecolor{myblue}{rgb}{0,0,0.6}
\definecolor{myred}{rgb}{0.4,0,0}
\definecolor{hlcolor}{rgb}{1,0.95,0}
\newcommand{\EndpointType}{\EndpointTypeT}
\newcommand{\EndpointTypeT}{T}
\newcommand{\EndpointTypeS}{S}
\newcommand{\var}{\varX}
\newcommand{\varX}{x}
\newcommand{\varY}{y}
\newcommand{\pvar}{\pvarX}
\newcommand{\pvarX}{X}
\newcommand{\pvarY}{Y}
\newcommand{\pvarZ}{Z}
\newcommand{\etvar}{\etvarA}
\newcommand{\etvarA}{\mathbf{t}}
\newcommand{\natset}{\mathbb{N}}
\newcommand{\ProcessSet}[1][]{
  \ifblank{#1}{
    \mathbb{P}
  }{
    \mathbb{P}^{[#1]}
  }
}
\newcommand{\FiniteProcessSet}{\ProcessSet[\mathsf{\color{myred}fin}]}
\newcommand{\Process}{\ProcessP}
\newcommand{\ProcessP}{P}
\newcommand{\ProcessQ}{Q}
\newcommand{\ProcessR}{R}
\newcommand{\Channel}{\ChannelA}
\newcommand{\ChannelA}{a}
\newcommand{\ChannelB}{b}
\newcommand{\ChannelC}{c}
\newcommand{\Name}{\NameU}
\newcommand{\NameU}{u}
\newcommand{\NameV}{v}
\newcommand{\Id}{\iota}
\newcommand{\mkts}[1]{\mathsf{\color{mygreen}#1}}
\newcommand{\timestamp}[1][\alpha]{\mkts{#1}}
\newcommand{\timestampM}{\timestamp[\alpha]}
\newcommand{\timestampN}{\timestamp[\beta]}
\newcommand{\Polarity}{\PolarityP}
\newcommand{\PolarityP}{p}
\newcommand{\PolarityQ}{q}
\newcommand{\EmptyEnv}{\emptyset}
\newcommand{\varassoc}[1]{
  \langle
  #1
  \rangle
}
\newcommand{\msg}[2]{
  #1
  \ifblank{#2}{}{\langle#2\rangle}
}
\newcommand{\imsg}[2]{
  #1
  \ifblank{#2}{}{(#2)}
}
\newcommand{\ep}[2]{
  #1^{#2}
}
\newcommand{\idle}{\mathbf{0}}
\newcommand{\new}[1]{
  (
  \nu#1
  )
}
\newcommand{\send}[2]{#1{!}\msg{}{#2}}
\newcommand{\receive}[2]{#1{?}\imsg{}{#2}}
\newcommand{\parop}{\mathbin{|}}
\newcommand{\Rec}[2][]{
  \mkkeyword{rec}
  \ifblank{#1}{}{
    {}^{[#1]}
  }
  \ifblank{#2}{}{
    ~#2
  }
}
\newcommand{\tint}{\mathit{int}}
\newcommand{\tmsg}[2]{
  #1
  \ifblank{#2}{}{
    #2
  }
}
\newcommand{\End}{\mkkeyword{end}}
\newcommand{\Action}[2][]{
  \ifblank{#1}{}{
    {\langle#1\rangle}
  }
  {#2}
}
\newcommand{\In}[3][]{\Action[#1]{?}\tmsg{#2}{#3}}
\newcommand{\Out}[3][]{\Action[#1]{!}\tmsg{#2}{#3}}
\newcommand{\trec}[1][]{
  \mu
  \ifblank{#1}{}{
    {}^{[#1]}
  }
}
\newcommand{\UEnv}{\Upgamma}
\newcommand{\WEnv}{\Upsigma}
\newcommand{\LEnv}{\Updelta}
\newcommand{\bind}[2]{#1 : #2}
\newcommand{\dom}{\mathsf{dom}}
\newcommand{\fpv}{\mathsf{fpv}}
\newcommand{\ftv}{\mathsf{ftv}}
\newcommand{\fn}{\mathsf{fn}}
\newcommand{\co}[1]{\overline{#1}}
\newcommand{\approximate}[2]{#1^{[#2]}}
\newcommand{\obligation}[2][]{
  \ifblank{#1}{
    \mathsf{ob}
  }{
    \mathsf{ob}_{#1}
  }
  (#2)
}
\newcommand{\capability}{\mathsf{cap}}
\newcommand{\eqdef}{\stackrel{\text{\tiny\upshape def}}{=}}
\newcommand{\red}{\rightarrow}
\newcommand{\nred}{\arrownot\red}
\newcommand{\asub}{\sqsubseteq}
\newcommand{\subt}{\leqslant}
\newcommand{\dualr}{\bowtie}
\newcommand{\wtp}[5][]{
  \ifblank{#2}{}{
    #2;
  }
  \ifblank{#3}{}{
    #3;
  }
  #4
  \vdash_{#1}
  #5
}
\newtheorem*{rep@theorem}{\rep@title}
\newcommand{\newreptheorem}[2]{%
\newenvironment{rep#1}[1]{%
 \def\rep@title{#2 \ref{##1}}%
 \begin{rep@theorem}}%
 {\end{rep@theorem}}}
\theoremstyle{definition}
\newtheorem{definition}{Definition}[section]
\newtheorem{example}[definition]{Example}
\theoremstyle{plain}
\newtheorem{proposition}[definition]{Proposition}
\newtheorem{lemma}[definition]{Lemma}
\newtheorem{theorem}[definition]{Theorem}
\newtheorem{corollary}[definition]{Corollary}
\theoremstyle{remark}
\newenvironment{PROOF}{
\begin{proof}
}{
\end{proof}
}
\title{From Lock Freedom to Progress Using Session Types}
\author{Luca Padovani
\institute{Dipartimento di Informatica, Universit\`a di Torino, Italy}
\email{luca.padovani@unito.it}
}
\begin{document}
\maketitle

\begin{abstract}
  Inspired by Kobayashi's type system for lock freedom, we define a
  behavioral type system for ensuring progress in a language of binary
  sessions.
  The key idea is to annotate actions in session types with priorities
  representing the urgency with which such actions must be performed
  and to verify that processes perform such actions with the required
  priority.
  Compared to related systems for session-based languages, the
  presented type system is relatively simpler and establishes progress
  for a wider range of processes.

%\keywords{Global progress, session types, dependency analysis.}

\end{abstract}

%%% Local Variables: 
%%% mode: latex
%%% TeX-master: "main"
%%% End: 

\section{Introduction}
\label{sec:introduction}

A system has the progress property if it does not accumulate garbage
(messages that are produced and never consumed) and does not have dead
code (processes that wait for messages that are never produced).
For \emph{session-based} systems, where processes interact by means of
\emph{sessions} through disciplined interaction patterns described by
\emph{session types}, the type systems by Dezani-Ciancaglini \emph{et
  al.}
\cite{DezaniDeLiguoroYoshida07,CONCUR08,CoppoDezaniPadovaniYoshida13b}
guarantee that well-typed processes have progress. These type systems
analyze the dependencies between different (possibly interleaved)
sessions and establish progress if no circular dependency is found.
In a different line of work~\cite{Kobayashi02}, Kobayashi defines a
type system ensuring a lock-freedom property closely related to
progress. Despite the similarities between the notions of progress and
lock-freedom, however, the type systems
in~\cite{DezaniDeLiguoroYoshida07,CONCUR08,CoppoDezaniPadovaniYoshida13b}
and the one in~\cite{Kobayashi02} are difficult to compare, because of
several major differences in both processes and types. In particular,
the type systems in~\cite{CONCUR08,CoppoDezaniPadovaniYoshida13b} are
defined for an asynchronous language with a native notion of session,
while Kobayashi's type system is defined for a basic variant of the
synchronous, pure $\pi$-calculus.

The natural approach for comparing these analysis techniques would
require compiling a (well-typed) ``source'' session-based process into
a ``target'' $\pi$-calculus process, and then using Kobayashi's type
system for reasoning on progress of the source process in terms of
lock-freedom of the target one.
The problem of such compilation schemes
(see~\cite{DardhaGiachinoSangiorgi12} for an example) is that they
produce target processes in which the communication topology is
significantly more complex than that of the corresponding source ones
because of explicit continuation channel passing and encoding of
recursion. The net effect is that many well-typed source processes
become ill-typed according to~\cite{Kobayashi02}.
In this work we put forward a different approach: we lift the
technique underlying Kobayashi's type system to a session type system
for reasoning directly on the progress properties of processes.
The results are very promising, because the type system we obtain is
simpler than the ones defined
in~\cite{DezaniDeLiguoroYoshida07,CONCUR08,CoppoDezaniPadovaniYoshida13b}
and at the same time is capable of proving progress for a wider range
of processes. As a welcome side effect, the structure given by
sessions allows us to simplify some technical aspects of Kobayashi's
original type system as well.

\newcommand{\evx}[2]{
  \ifblank{#2}{
    {\color{mygreen}\timestamp[#1]}
  }{
    {\color{mygreen}\timestamp[#1]_{\mathsf{#2}}}
  }
}
\newcommand{\ev}[2][\alpha]{\evx{#1}{#2}}
\newcommand{\evA}{\ev[\alpha]{}}
\newcommand{\evB}{\ev[\beta]{}}
\newcommand{\evC}{\ev[\gamma]{}}
\newcommand{\evD}{\ev[\delta]{}}

To sketch the key ideas of Kobayashi's type system applied to
sessions, consider the term
\begin{equation}
\label{eq:mutual}
\receive{\ep{a}{+}}{\var}.\send{\ep{b}{-}}{4}
\parop
\receive{\ep{b}{+}}{\varY}.\send{\ep{a}{-}}{3}
\end{equation}
which represents the parallel composition of two processes that
communicate through two distinct sessions named $a$ and $b$. Each
session is accessed via its two endpoints, which we represent as the
name of the session decorated with a \emph{polarity} $+$ or $-$, along
the lines of~\cite{GayHole05}. We say that $\ep{a}-$ is the peer of
$\ep{a}+$, and vice versa. In \eqref{eq:mutual}, the process on the
left hand side of $\parop$ is waiting for a message from endpoint
$\ep{a}{+}$, after which it sends $4$ over endpoint $\ep{b}-$. The
process on the right hand side of $\parop$ instead is waiting for a
message from endpoint $\ep{b}+$, after which it sends $3$ over
endpoint $\ep{a}{-}$.
Notice that the message that is supposed to be received from $\ep{a}+$
is the one sent over $\ep{a}-$, and the message that is supposed to be
received from $\ep{b}+$ is the one sent over $\ep{b}-$. Clearly, as
each send operation is guarded by a receive, the term denotes a
process without progress. In particular, there is a circular
dependency between the actions pertaining the two sessions $a$ and
$b$.

The mechanism used for detecting these circular dependencies consists
in associating each action with an ordered pair of
\emph{priorities}. For instance, the receive action on $\ep{a}+$ would
be associated with the pair $\langle\evA, \evB\rangle$ where the first
component ($\evA$) measures the urgency to perform the action by the
process using $\ep{a}+$ and the second component ($\evB$) measures the
urgency to perform the complementary (send) action by the process
using the peer endpoint $\ep{a}-$. Because $\ep{a}-$ is the peer of
$\ep{a}+$, it is understood that such send action will be associated
with a pair that has exactly the same two components as $\langle\evA,
\evB\rangle$, but in reverse order, namely $\langle\evB, \evA\rangle$.
Similarly, the two actions on $\ep{b}+$ and $\ep{b}-$ will be
associated with two pairs $\langle\evC,\evD\rangle$ and
$\langle\evD,\evC\rangle$.
In the example, the two parallel processes are performing the receive
actions on the endpoints $\ep{a}+$ and $\ep{b}+$ first, therefore
complying with their respective duties no matter of how high the
priorities $\evA$ and $\evC$ are.
The send operation on $\ep{b}-$, on the other hand, is guarded by the
receive action on $\ep{a}+$ and will not be performed until this
action is completed, namely until a message is sent over endpoint
$\ep{a}-$. So, the left subprocess is complying with its duty to
perform the send action on $\ep{b}-$ with priority $\evD$ provided
that such priority is lower than that ($\evB$) to perform the action
on $\ep{a}-$.
At the same time, by looking at the right subprocess, we deduce that
such process is complying with its duty to perform the action on
$\ep{a}-$ with priority $\evB$ provided that $\evB$ is lower than
$\evD$ (the priority associated with the action on $\ep{b}-$).
Overall, we realize that the two constraints ``$\evD$ lower than
$\evB$'' and ``$\evB$ lower than $\evD$" are not simultaneously
satisfiable for any choice of $\evB$ and $\evD$, which is consistent
with the fact that the system makes no progress.

An even simpler example of process without progress is
\begin{equation}
\label{eq:self}
  \receive{\ep{a}+}{\var}.\send{\ep{a}-}{\var}
\end{equation}
where the input action on $\ep{a}+$ guards the very send action that
should synchronize with it. If we respectively associate the actions
on $\ep{a}+$ and $\ep{a}-$ with the pairs of priorities
$\langle\ev{},\evB{}\rangle$ and $\langle\evB{},\ev{}\rangle$ we see
that the structure of the process gives rise to the constraint
``$\evB$ lower than $\evB$'', which is clearly unsatisfiable.

In summary, the type system that we are going to present relies on
pairs of priorities associated with each action in the system, and
verifies whether the relations originating between these priorities as
determined by the structure of processes are satisfiable. If this is
the case, it can be shown that the system has progress.
All it remains to understand is the role played by session types. In
fact, in all the examples above we have associated priorities with
actions occurring within processes. Since a session type system
determines a one-to-one correspondence between actions occurring in
processes and actions occurring in session types, we let such pairs of
priorities be part of the session types themselves. For instance, the
left process in~\eqref{eq:mutual} would be well typed in an
environment with the associations $\ep{a}+ :
\In[\evA,\evB]{\tint}.\End$ and $\ep{b}- :
\Out[\evD,\evC]{\tint}.\End$ provided that ``$\evB$ is lower than
$\evD$''.

We continue the exposition by defining a calculus of binary sessions
in Section~\ref{sec:language}.
We purposely use a minimal set of supported features to ease the
subsequent formal development but we will be a bit more liberal in the
examples. The type system is defined in Section~\ref{sec:types},
\ifproofs
which also includes the soundness proof.
\else
which also sketches the basic structure of the soundness proof.
\fi
Section~\ref{sec:extensions} discusses a few extensions that can be
accommodated with straightforward adjustments to the type system.
Section~\ref{sec:conclusions} discusses related work and concludes.
\ifproofs
\else
Because of space limitations proofs and additional technical material
have been omitted and can be found in the full version of the paper
available on the author's home page.
\fi

%%% Local Variables: 
%%% mode: latex
%%% TeX-master: "main"
%%% End: 

%\input{example}

\section{Language}
\label{sec:language}

\paragraph{Syntax.}
We begin by fixing a few conventions:
we use $m$, $n$, $\dots$ to range over natural numbers;
we use $\ChannelA$, $\ChannelB$, $\dots$ to range over (countably
many) \emph{channels};
we use $\PolarityP$, $\PolarityQ$, $\dots$ to range over the
\emph{polarities} $+$ and $-$;
we define an involution $\co{\,\cdot\,}$ over polarities such that
$\co+ = {-}$;
\emph{endpoints} $\ep\Channel+$, $\ep\Channel-$, $\dots$ are channels
decorated with a polarity;
we use $\varX$, $\varY$, $\dots$ to range over (countably many)
\emph{variables};
we use $\NameU$, $\NameV$, $\dots$ to range over \emph{names}, which
are either variables or endpoints;
we use $\Id$, $\dots$ to range over \emph{indices}, which are either
natural numbers or $\infty$;
we let $\infty + 1 = \infty$ and we extend the usual total order $<$
over natural numbers to indices so that $n < \infty$ for every $n$;
we use $\pvarX$, $\pvarY$, $\dots$ to range over (countably many)
\emph{process variables};
we use $\ProcessP$, $\ProcessQ$, $\dots$ to range over processes.

\begin{table}
  \caption{\label{tab:syntax} Syntax of processes.\strut}
\framebox[\columnwidth]{
\begin{math}
\displaystyle
\begin{array}[t]{@{}c@{\qquad}c@{}}
\begin{array}[t]{@{}rcl@{\quad}l@{}}
  \Process & ::= & & \textbf{Process} \\
  &   & \idle & \text{(idle)} \\
  & | & \pvar & \text{(variable)} \\
  & | & \receive\NameU{\var}.\Process & \text{(input)} \\
  & | & \send\NameU\NameV.\Process & \text{(output)} \\
  & | & \Process \parop \ProcessQ & \text{(composition)} \\
  & | & \new\ChannelA \Process & \text{(session)} \\
  & | & \Rec[\Id]\pvar.\Process & \text{(recursion)} \\
\end{array}
\end{array}
\end{math}
}
\end{table}

The language we work with is a simple variant of the synchronous
$\pi$-calculus equipped with binary sessions.
Each session takes place on a private channel $\Channel$, which is
represented as two \emph{peer endpoints} $\ep\Channel+$ and
$\ep\Channel-$ so that a message sent over one of the endpoints is
received from its peer.
The syntax of processes is defined by the grammar in
Table~\ref{tab:syntax} and briefly described in the following
paragraphs.
The term $\idle$ denotes the idle process, which performs no actions.
The term $\receive\NameU\var.\Process$ denotes a process that waits
for a message from endpoint $\NameU$, binds the message to the
variable $\var$, and then behaves as $\Process$.
The term $\send\NameU\NameV.\Process$ denotes a process that sends
message $\NameV$ over the endpoint $\NameU$ and then continues as
$\Process$.
In the prefixes $\receive\NameU\var$ and $\send\NameU\NameV$ we call
$\NameU$ the \emph{subject}.
The term $\ProcessP \parop \ProcessQ$ denotes the conventional
parallel composition of $\ProcessP$ and $\ProcessQ$.
The term $\new\Channel \Process$ denotes a session on channel
$\Channel$ that is private to $\Process$. Within $\Process$ the
session can be accessed through the two endpoints $\ep\Channel+$ and
$\ep\Channel-$.
Terms $\pvar$ and $\Rec[\Id]\pvar.\Process$ are used for building
recursive processes. The only unusual feature here is the index $\Id$
which, when finite, sets an upper bound to the number of unfoldings of
the recursive term.

A term $\receive\NameU\var.\Process$ binds the variable $\var$ in
$\Process$, a term $\new\Channel\Process$ binds the endpoints
$\ep\Channel+$ and $\ep\Channel-$ in $\Process$, and a term
$\Rec[\Id]\pvar.\Process$ binds the process variable $\pvar$ in
$\Process$. Then, $\fn(\Process)$ denotes the set of free names of a
generic process $\Process$. Similarly for $\fpv(\Process)$, but for
free process variables.
We sometimes write $\prod_{i=1}^n \Process_i$ for the composition
$\Process_1 \parop \cdots \parop \Process_n$ and
$\new{\tilde\Channel}\Process$ for
$\new{\Channel_1}\cdots\new{\Channel_n}\Process$.
We write $\ProcessSet[\Id]$ for the set of all processes such that
every $\Rec{}$ occurring in them has an index no greater than $\Id$
and we let $\FiniteProcessSet = \bigcup_{n\in\natset} \ProcessSet[n]$.
We say that $\Process$ is a \emph{user process} if $\Process \in
\ProcessSet[\infty] \setminus \FiniteProcessSet$. That is, user
processes only allow for unbounded unfoldings of recursions. The
remaining processes are only useful for proving soundness of the type
system.

\ifproofs
\begin{table}
  \caption{\label{tab:congruence} Structural congruence for processes.\strut}
\framebox[\columnwidth]{
\begin{math}
\displaystyle
\begin{array}[t]{@{}c@{}}
  \inferrule[\defrule{s-par 1}]{}{
    \idle \parop \Process \equiv \Process
  }
  \qquad
  \inferrule[\defrule{s-par 2}]{}{
    \ProcessP \parop \ProcessQ \equiv \ProcessQ \parop \ProcessP
  }
  \qquad
  \inferrule[\defrule{s-par 3}]{}{
    \ProcessP \parop (\ProcessQ \parop \ProcessR)
    \equiv
    (\ProcessP \parop \ProcessQ) \parop \ProcessR
  }
  \qquad
  \inferrule[\defrule{s-res 1}]{}{
    \new\ChannelA
    \new\ChannelB
    \Process
    \equiv
    \new\ChannelB
    \new\ChannelA
    \Process
  }
  \\\\
  \inferrule[\defrule{s-res 2}]{
    \ep\ChannelA+, \ep\ChannelA- \not\in \fn(\ProcessQ)
  }{
    \new\ChannelA \ProcessP \parop \ProcessQ
    \equiv
    \new\ChannelA (\ProcessP \parop \ProcessQ)
  }
\end{array}
\end{math}
}
\end{table}
\fi

\begin{table}
  \caption{\label{tab:reduction} Reduction of processes.\strut}
\framebox[\columnwidth]{
\begin{math}
\displaystyle
\begin{array}[t]{@{}c@{}}
  \inferrule[\defrule{r-comm}]{}{
    \send{\ep\ChannelA\PolarityP}{\ep\ChannelC\PolarityQ}.\ProcessP
    \parop
    \receive{\ep\ChannelA{\co\PolarityP}}{\var}.\ProcessQ
    \red
    \ProcessP
    \parop
    \ProcessQ\subst{\ep\ChannelC\PolarityQ}{\var}
  }
  \qquad
  \inferrule[\defrule{r-rec}]{}{
    \Rec[\Id+1]\pvar.\Process
    \red
    \Process\subst{\Rec[\Id]\pvar.\Process}{\pvar}
  }
  \\\\
  \inferrule[\defrule{r-res}]{
    \ProcessP
    \red
    \ProcessQ
  }{
    \new\ChannelA\ProcessP
    \red
    \new\ChannelA\ProcessQ
  }
  \qquad
  \inferrule[\defrule{r-par}]{
    \ProcessP \red \ProcessP'
  }{
    \ProcessP \parop \ProcessQ \red \ProcessP' \parop \ProcessQ
  }
  \qquad
  \inferrule[\defrule{r-struct}]{
    \ProcessP \equiv \ProcessP'
    \\
    \ProcessP' \red \ProcessQ'
    \\
    \ProcessQ' \equiv \ProcessQ
  }{
    \ProcessP \red \ProcessQ
  }
\end{array}
\end{math}
}
\end{table}

\paragraph{Reduction Semantics.}
The operational semantics of the calculus is expressed as usual as a
combination of a structural congruence, which rearranges equivalent
terms, and a reduction relation.
\ifproofs
Structural congruence is the least congruence that includes alpha
renaming of bound names and process variables and the laws in
Table~\ref{tab:congruence}.  It is basically the same as that of the
$\pi$-calculus, with the only exception of~\refrule{s-res 2} which
changes the scope of \emph{both endpoints} $\ep\Channel+$ and
$\ep\Channel-$ of a channel $\Channel$.
\else
Structural congruence includes alpha renaming of bound names and is
basically the same as that of the $\pi$-calculus, with the caveat that
restriction of $\Channel$ binds the \emph{two endpoints}
$\ep\Channel+$ and $\ep\Channel-$.
\fi
Reduction is the least relation defined by the rules in
Table~\ref{tab:reduction}. It includes two axioms for communication
\refrule{r-comm} and recursion unfolding \refrule{r-rec}, two
context rules \refrule{r-res} and \refrule{r-par}, and a rule for
reduction up to structural congruence \refrule{r-struct}.
Most rules are standard. In \refrule{r-comm}, a synchronization occurs
only between two endpoints of the same channel with dual polarities
and $\ProcessQ\subst{\ep\ChannelC\PolarityQ}{\var}$ denotes the
capture-avoiding substitution of endpoint $\ep\ChannelC\PolarityQ$ in
place of the free occurrences of $\var$ within $\ProcessQ$. Note, in
particular, that $(\new\ChannelC
\send{\ep\ChannelC+}\var.\idle)\subst{\ep\ChannelC-}{\var}$ is
undefined and that such a substitution is applicable only after alpha
renaming the bound channel $\ChannelC$ by means of structural
congruence.
Unfolding of recursions is allowed only when the index associated with
the recursive term is not zero. If different from $\infty$, the index
is decremented by the unfolding. The notation
$\ProcessP\subst{\ProcessQ}{\pvar}$ denotes the capture-avoiding
substitution of process $\ProcessQ$ in place of the free occurrences
of the process variable $\pvar$ in $\ProcessP$. For example,
$(\new\Channel \pvar)
\subst{\send{\ep\Channel+}{\ep\ChannelB+}}{\pvar}$ is undefined.

In the following we write $\red^*$ for the reflexive, transitive
closure of $\red$ and we say that $\Process$ is in \emph{normal form},
written $\Process \nred$, if there is no $\ProcessQ$ such that
$\Process \red \ProcessQ$.

\paragraph{Progress.}
We conclude this section with the formalization of the progress
property that we have alluded to in Section~\ref{sec:introduction}.

\begin{definition}[progress]
\label{def:progress}
We say that $\ProcessP$ has \emph{progress} if:
\begin{enumerate}
\item $\ProcessP \red^*
  \new{\tilde\Channel}(\send{\ep\ChannelA\PolarityP}{\ep\ChannelC\PolarityQ}.\Process' \parop
  \ProcessQ)$ implies $\ProcessQ \red^*
  \new{\tilde\ChannelB}(\receive{\ep\ChannelA{\co\PolarityP}}{\var}.\ProcessQ'
  \parop \ProcessR)$ where $\ChannelA$ does not occur in
  $\tilde\ChannelB$;

\item $\ProcessP \red^*
  \new{\tilde\ChannelA}(\receive{\ep\ChannelA\PolarityP}{\var}.\ProcessP' \parop
  \ProcessQ)$ implies $\ProcessQ \red^*
  \new{\tilde\ChannelB}(\send{\ep\ChannelA{\co\PolarityP}}{\ep\ChannelC\PolarityQ}.\ProcessQ'
  \parop \ProcessR)$ where $\ChannelA$ does not occur in
  $\tilde\ChannelB$.
\end{enumerate}
\end{definition}

Note in particular that our notion of progress differs from deadlock
freedom in the sense that it is not sufficient for a process to be
able to reduce in order for it to enjoy progress. For instance,
$\send{\ep\ChannelA+}{\ep\ChannelB-}.\idle \parop
\Rec[\infty]\pvar.\pvar$ does \emph{not} have progress even if it
admits an infinite sequence of reductions because the message
$\ep\ChannelB-$ is never consumed (no prefix
$\receive{\ep\ChannelA-}\var$ ever emerges).

%%% Local Variables: 
%%% mode: latex
%%% TeX-master: "main"
%%% End: 

\section{Session Types for Global Progress}
\label{sec:types}

\begin{table}
\caption{\label{tab:types} Syntax of session types.}
\framebox[\columnwidth]{
\begin{math}
\displaystyle
\begin{array}[t]{@{}rcl@{\quad}l@{}}
  \EndpointType & ::= & & \textbf{Session Type} \\
  &   & \End & \text{(termination)} \\
  & | & \etvar & \text{(type variable)} \\
  & | & \In[\timestampM,\timestampN]{}{\EndpointTypeS}.\EndpointTypeT &
  \text{(input)} \\
  & | & \Out[\timestampM,\timestampN]{}{\EndpointTypeS}.\EndpointTypeT &
  \text{(output)} \\
  & | & \trec[\Id]\etvar.\EndpointType & \text{(recursion)} \\
\end{array}
\end{math}
}
\end{table}

\paragraph{Definitions.}
We use $\EndpointTypeT$, $\EndpointTypeS$, $\dots$ to range over
\emph{session types}, $\etvar$, $\dots$ to range over (countably many)
\emph{session type variables}, and $\timestampM$, $\timestampN$,
$\dots$ to range over \emph{priorities}, which we concretely represent
as natural numbers with the interpretation that ``smaller number''
means ``higher priority'', $0$ denoting the highest priority.
The syntax of session types is described by the grammar in
Table~\ref{tab:types}.
The term $\End$ denotes an endpoint on which no further input/output
operation is possible.
The term
$\In[\timestampM,\timestampN]{}{\EndpointTypeS}.\EndpointTypeT$
denotes an endpoint that must be used with priority $\timestampM$ for
receiving a message of type $\EndpointTypeS$ and according to
$\EndpointTypeT$ afterwards.  Similarly, the term
$\Out[\timestampM,\timestampN]{}{\EndpointTypeS}.\EndpointTypeT$
denotes an endpoint that must be used with priority $\timestampM$ for
sending a message of type $\EndpointTypeS$ and according to
$\EndpointTypeT$ afterwards.
Following Kobayashi~\cite{Kobayashi02}, we sometimes call
$\timestampM$ \emph{obligation} and $\timestampN$ \emph{capability}:
the obligation $\timestampM$ associated with an action of an endpoint
expresses the \emph{duty} to perform the action with priority
$\timestampM$ by the process owning the endpoint;
the capability $\timestampN$ associated with an action of an endpoint
expresses the \emph{guarantee} that the corresponding complementary
action will be performed with priority $\timestampN$ by the process
owning the peer endpoint.
Terms $\etvar$ and $\trec[\Id]\etvar.\EndpointTypeT$ are used for
building recursive session types, as usual. Like in processes,
$\trec[\Id]$'s are decorated with an index $\Id$ denoting the number
of unfoldings allowed on this recursion, which is unbounded when $\Id
= \infty$. The only binder for session type variables is $\trec$, so the
notions of free and bound type variables are as expected. We write
$\ftv(\EndpointTypeT)$ for the set of free type variables of
$\EndpointTypeT$.

We restrict session types to the terms generated by the grammar in
Table~\ref{tab:types} that satisfy the following conditions:
\begin{itemize}
\item there are no subterms of the form
  $\trec\etvar_1\cdots\trec\etvar_n.\etvar_1$;

\item the terms $\EndpointTypeS$ in all prefixes
  $\In[\timestampM,\timestampN]{}\EndpointTypeS$ and
  $\Out[\timestampM,\timestampN]{}\EndpointTypeS$ are \emph{closed}.
\end{itemize}

The first condition ensures that session types are \emph{contractive}
and avoids meaningless terms such as $\trec\etvar.\etvar$. The second
condition ensures that session types are \emph{stratified} (a similar
constraint can be found
in~\cite{CastagnaDezaniGiachinoPadovani09,BarbaneraDeLiguoro10,Padovani12})
and is imposed to simplify the notion of duality
(Definition~\ref{def:duality}).

We take an \emph{iso-recursive} point of view and distinguish between
a recursive session type $\trec[\Id+1]\etvar.\EndpointTypeT$ and its
unfolding
$\EndpointTypeT\subst{\trec[\Id]\etvar.\EndpointTypeT}{\etvar}$, where
$\EndpointTypeT\subst{\EndpointTypeS}{\etvar}$ denotes the
capture-avoiding substitution of the free occurrences of $\etvar$ in
$\EndpointTypeT$ with $\EndpointTypeS$. Note that in the unfolding the
index $\Id + 1$ is decremented to $\Id$, unless $\Id = \infty$ in
which case it remains $\infty$.

A crucial notion of every theory of binary session types is that of
\emph{duality}, which relates the session types associated with the
peer endpoints of a session. Informally, two session types are dual of
each other if they specify complementary behaviors, whereby an input
action with a message of type $\EndpointTypeS$ in one session type is
matched by an output action with a message of the same type in the
dual session type.
Formally, we define duality as follows:

\begin{definition}[duality]
\label{def:duality}
\emph{Duality} is the least relation $\dualr$ defined by the rules
\[
\inferrule[\defrule{d-end}]{}{
  \End \dualr \End
}
\qquad
\inferrule[\defrule{d-var}]{}{
  \etvar \dualr \etvar
}
\qquad
\inferrule[\defrule{d-prefix}]{
  \EndpointTypeT \dualr \EndpointTypeT'
}{
  \In[\timestampM, \timestampN]{\EndpointTypeS}.\EndpointTypeT
  \dualr
  \Out[\timestampN, \timestampM]{\EndpointTypeS}.\EndpointTypeT'
}
\qquad
\inferrule[\defrule{d-rec}]{
  \EndpointTypeT \dualr \EndpointTypeS
}{
  \trec[\Id] \etvar.\EndpointTypeT
  \dualr
  \trec[\Id] \etvar.\EndpointTypeS
}
\qquad
\inferrule[\defrule{d-unfold}]{
  \EndpointTypeT \dualr \trec[\Id+1]\etvar.\EndpointTypeS
}{
  \EndpointTypeT \dualr \EndpointTypeS\subst{\trec[\Id]\etvar.\EndpointTypeS}{\etvar}
}
\]
plus the symmetric ones of \refrule{d-prefix} and
\refrule{d-unfold}.
\end{definition}

Rules~\refrule{d-end}, \refrule{d-var}, and \refrule{d-rec} are
standard from binary session type theories.
Rule~\refrule{d-prefix} is conventional except for the swapping of
priorities decorating the actions that we have just discussed.
Rule~\refrule{d-unfold} is necessary because our session types are
iso-recursive. In particular, thanks to this rule we can derive that
$\trec[\infty]\etvar.\In[\timestampM,\timestampN]{}\EndpointTypeS.\etvar
\dualr
\Out[\timestampN,\timestampM]{}\EndpointTypeS.\trec[\infty]\etvar.\Out[\timestampN,\timestampM]{}\EndpointTypeS.\etvar$
where, in the second session type, we have unfolded the recursion
once. This rule is necessary because the types associated with peer
endpoints will in general be unfolded independently.

The judgments of the type system have the form
$\wtp[\Id]{\WEnv}{\UEnv}{\LEnv}{\Process}$ where
\[
\begin{array}{c@{\qquad}c@{\qquad}c}
\begin{array}{r@{~}c@{~}l@{~}c@{~}l@{~}c@{~}l}
  \LEnv & ::= & \emptyset & \mid & \Name : \EndpointType & \mid &
  \LEnv, \LEnv \\
\end{array}
&
\begin{array}{r@{~}c@{~}l@{~}c@{~}l@{~}c@{~}l}
\UEnv & ::= & \emptyset & \mid & \pvar : \varassoc\LEnv & \mid &
\UEnv, \UEnv \\
\end{array}
&
\begin{array}{r@{~}c@{~}l@{~}c@{~}l@{~}c@{~}l}
\WEnv & ::= & \emptyset & \mid & \etvar : \timestamp & \mid & \WEnv, \WEnv \\
\end{array}
\end{array}
\]
respectively define the \emph{name environment} associating names
$\Name$ with session types $\EndpointType$, the \emph{process
  environment} $\UEnv$ associating process variables $\pvar$ with name
environments $\varassoc{\LEnv}$, and the \emph{type variable
  environment} $\WEnv$ associating type variables $\etvar$ with
priorities $\timestamp$.
For all the environments we let $\dom(\cdot)$ be the function that
returns their domain, we assume that composition through `,' is
defined only when the environments being composed have disjoint
domains, and we identify environments modulo commutativity and
associativity of `,' and neutrality of $\EmptyEnv$.
We also write $\UEnv \setminus \pvar$ for the restriction of $\UEnv$
to $\dom(\UEnv) \setminus \set\pvar$ and $\UEnv + \pvar :
\varassoc\LEnv$ for $(\UEnv \setminus \pvar), \pvar :
\varassoc\LEnv$. Similarly for $\WEnv$.

We will need to compute the obligation of a type, which measures the
urgency with which a value having that type must be used. Intuitively,
the obligation of a session type $\EndpointType$ is given by the
obligation of its topmost action. This leaves open the question as to
what is the priority of $\EndpointType$ if $\EndpointType$ has no
topmost action, in particular when $\EndpointType$ is $\End$ or a type
variable.
In the former case we should return a value that means ``no urgency at
all'', since an endpoint with type $\End$ should \emph{not} be
used. We will use the special value $\infty$ to this purpose.
In the latter case we need a type variable environment that keeps
track of the obligation associated with each type variable, as
determined by the recursive structure of the session type in which it
is bound.
More precisely, whenever $\ftv(\EndpointType) \subseteq \dom(\WEnv)$
we define $\obligation[\WEnv]{\EndpointType}$ as:
\[
\obligation[\WEnv]{\EndpointTypeT} \eqdef
\begin{cases}
  \infty & \text{if $\EndpointTypeT = \End$} \\
  \WEnv(\etvar) & \text{if $\EndpointTypeT = \etvar \in \dom(\WEnv)$}
  \\
  \timestampM & \text{if $\EndpointTypeT =
    \In[\timestampM,\timestampN]{}\EndpointTypeS.\EndpointTypeT'$ or 
    $\EndpointTypeT =
    \Out[\timestampM,\timestampN]{}\EndpointTypeS.\EndpointTypeT'$} \\
  \obligation[\WEnv]{\EndpointTypeS} & \text{if $\EndpointTypeT =
    \trec[\Id]\etvar.\EndpointTypeS$} \\
\end{cases}
\]
Note that $\obligation[\WEnv]{\trec[\Id]\etvar.\EndpointTypeS}$ is
well defined because session types are contractive.

\begin{table}
\caption{\label{tab:type_system} Type rules for processes.\strut}
\framebox[\columnwidth]{
\begin{math}
\displaystyle
\begin{array}[t]{@{}c@{}}
  \inferrule[\defrule{t-idle}]{}{
    \wtp[\Id]{\WEnv}{\UEnv}{
      \EmptyEnv
    }{
      \idle 
    }
  }
  \qquad
  \inferrule[\defrule{t-var}]{}{
    \wtp[\Id]{
      \WEnv, \tilde\etvar : \tilde\timestamp
    }{
      \UEnv,
      \pvar : \varassoc{
        \bind{\tilde\Name}{\tilde\etvar}
      }
    }{
      \bind{\tilde\Name}{\tilde\etvar}
    }{
      \pvar
    }
  }
  \\\\
  \inferrule[\defrule{t-input}]{
    \wtp[\Id]{\WEnv}{\UEnv}{
      \LEnv,
      \bind{\Name}{\EndpointTypeT},
      \bind{\var}{\EndpointTypeS}
    }{
      \Process
    }
    \\
    \forall\NameV\in\dom(\LEnv):
    \timestampN < \obligation[\WEnv]{\LEnv(\NameV)}
  }{
    \wtp[\Id]{\WEnv}{\UEnv}{
      \LEnv,
      \bind{\Name}{
        \In[\timestampM,\timestampN]{}{\EndpointTypeS}.\EndpointTypeT
      }
    }{
      \receive{\Name}{\var}.\Process
    }
  }
  \qquad
  \inferrule[\defrule{t-par}]{
    \wtp[\Id]{\WEnv}{\UEnv}{
      \LEnv_1
    }{
      \ProcessP
    }
    \\
    \wtp[\Id]{\WEnv}{\UEnv}{
      \LEnv_2
    }{
      \ProcessQ
    }
  }{
    \wtp[\Id]{\WEnv}{\UEnv}{
      \LEnv_1, \LEnv_2
    }{
      \ProcessP \parop \ProcessQ
    }
  }
  \\\\
  \inferrule[\defrule{t-output}]{
    \wtp[\Id]{\WEnv}{\UEnv}{
      \LEnv,
      \bind{\NameU}{\EndpointTypeT}
    }{
      \Process
    }
    \\
    \timestampN < \obligation[\WEnv]{\EndpointTypeS}
    \\
    \forall\NameV\in\dom(\LEnv):
    \timestampN < \obligation[\WEnv]{\LEnv(\NameV)}
  }{
    \wtp[\Id]{\WEnv}{\UEnv}{
      \LEnv,
      \bind{\NameU}{
        \Out[\timestampM, \timestampN]{}{\EndpointTypeS}.\EndpointTypeT
      },
      \bind{\NameV}{\EndpointTypeS}
    }{
      \send{\NameU}{\NameV}.\Process
    }
  }
  \qquad
  \inferrule[\defrule{t-end}]{
    \wtp[\Id]{\WEnv}{\UEnv}{\LEnv}{\Process}
  }{
    \wtp[\Id]{\WEnv}{\UEnv}{
      \LEnv, \Name : \End
    }{\Process}
  }
  \\\\
  \inferrule[\defrule{t-rec}]{
    \wtp[\Id]{
      \WEnv
      +
      \tilde\etvar : \obligation[\WEnv]{\tilde\EndpointTypeT}
    }{
      \UEnv
      +
      \pvar :\varassoc{
        \bind{\tilde\Name}{\tilde\etvar}
      }
    }{
      \bind{\tilde\Name}{\tilde\EndpointTypeT}
    }{
      \Process
    }
    \\
    \Id' \leq \Id
  }{
    \wtp[\Id]{\WEnv}{\UEnv}{
      \bind{\tilde\Name}{\trec[\Id']\tilde\etvar.\tilde\EndpointTypeT}
    }{
      \Rec[\Id']\pvar.\Process
    }
  }
  \qquad
  \inferrule[\defrule{t-session}]{
    \wtp[\Id]{\WEnv}{\UEnv}{
      \LEnv,
      \bind{\ep\Channel+}{\EndpointTypeT},
      \bind{\ep\Channel-}{\EndpointTypeS}
    }{
      \Process
    }
    \\
    \EndpointTypeT \dualr \EndpointTypeS
  }{
    \wtp[\Id]{\WEnv}{\UEnv}{
      \LEnv
    }{
      \new{\Channel}\Process
    }
  }
\end{array}
\end{math}
}
\end{table}

In the following we will make abundant use of sequences. For example,
$\tilde\Name$ denotes a (possibly empty) sequence $\Name_1, \dots,
\Name_n$ of names. With some abuse of notation we also use sequences
for denoting environments. For example, we write $\tilde\etvar :
\tilde\timestamp$ for $\etvar_1 : \timestamp_1, \dots, \etvar_n :
\timestamp_n$ and $\tilde\Name :
\trec[\Id]\tilde\etvar.\tilde\EndpointType$ for $\Name_1 :
\trec[\Id]\etvar_1.\EndpointType_1, \dots, \Name_n :
\trec[\Id]\etvar_n.\EndpointType_n$.

The typing rules for processes are defined in
Table~\ref{tab:type_system}.
Rule~\refrule{t-idle} states that the idle process is well typed only
in the empty name environment. This is because endpoints are linear
entities and the ownership of an endpoint with a type different from
$\End$ imposes its use.

Rules~\refrule{t-input} and~\refrule{t-output} deal with
prefixes. They check that the process is entitled to receive/send a
message on the endpoint $\Name$ and does so with the required
priority. In \refrule{t-input}, the received message $\var$ becomes
part of the receiver's name environment, as the receiver has acquired
its ownership. In \refrule{t-output}, the sent message $\NameV$ is
removed from the sender's name environment because its ownership has
been transferred.
The premise $\timestampN < \obligation[\WEnv]{\LEnv(\NameV)}$ for
every $\NameV \in \dom(\LEnv)$ can be explained in this way: a process
of the form $\receive{\Name}{\var}.\Process$ blocks until a message is
received from endpoint $\Name$. So, while this process is complying
with its duty to use $\Name$ regardless of the obligation
$\timestampM$ associated with it, it is also postponing the use of any
endpoint in $\dom(\LEnv)$ until this synchronization takes place. The
capability $\timestampN$ gives information about the priority with
which the peer endpoint of $\Name$ will be used elsewhere in the
system. Therefore, the process is respectful of the priorities of the
endpoints in $\dom(\LEnv)$ if they are lower (hence numerically
greater) than $\timestampN$.
Three considerations:
first of all notice that, the reasoning excludes that the peer
endpoint of $\Name$ is in $\dom(\LEnv)$. If it were, its obligation
would be $\timestampN$ and the premise would require the unsatisfiable
constraint $\timestampN < \timestampN$. This allows us to rule out
configurations such as that exemplified in~\eqref{eq:self}.
Second, if a process has in the name environment an endpoint whose
type has highest priority (hence obligation 0), the process must use
such endpoint immediately. If the process has two or more endpoints
with highest priority, the only way for the process to be well typed
is to fork into as many different parallel subprocesses, each
immediately using one of the endpoints with highest priority.
Third, in the case of \refrule{t-output} it is also required that
$\timestampN$ be strictly smaller than the obligation associated with
the type of the sent message. This is because such message cannot be
used until it is received, namely until the send operation is
completed.

Rule~\refrule{t-par} splits the name environment and distributes its
content among the composed processes.

Rule~\refrule{t-session} deals with session restrictions and augments
the name environment in the restricted process with the two peer
endpoints of the session, which must be related by duality.

Rules~\refrule{t-rec} and~\refrule{t-var} deal with recursions. The
former verifies that the name environment consists of endpoints with a
recursive type, therefore imposing a correspondence between recursive
processes and recursive types. Then, it checks that the body of the
recursion is well typed where the type variable environment has been
augmented with the obligations associated with the recursive type
variables, the process environment has been augmented with the
association that specifies the valid name environment that is expected
whenever the recursion process variable is met, and the name
environment is updated by opening up the recursive types. The rule also
checks that the $\Id'$ indices in the recursive types and in the
recursive process do not exceed the bound $\Id$.
Finally, rule~\refrule{t-end} discards names from the name
environment, provided that these have type $\End$.

%%% Local Variables: 
%%% mode: latex
%%% TeX-master: "main"
%%% End: 

\paragraph{Basic Properties.}
Below we collect a few basic properties of the type system leading to
the subject reduction result. We begin with two standard substitution
results, one for processes and the other one for endpoints.

\ifproofs
\begin{lemma}[weakening]
\label{lem:weakening}
If $\wtp[\Id]{\WEnv}{\UEnv}{\LEnv}{\Process}$ and $\WEnv \subseteq
\WEnv'$ and $\UEnv \subseteq \UEnv'$, then
$\wtp[\Id]{\WEnv'}{\UEnv'}{\LEnv}{\Process}$.
\end{lemma}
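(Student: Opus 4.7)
The plan is to proceed by structural induction on the derivation of $\wtp[\Id]{\WEnv}{\UEnv}{\LEnv}{\Process}$, exploiting the fact that neither $\WEnv$ nor $\UEnv$ is treated linearly: only $\LEnv$ carries linear data, and it is not being weakened. Before starting the induction I would dispatch one auxiliary observation: the obligation function is stable under extension of the type variable environment, i.e.\ if $\ftv(\EndpointType) \subseteq \dom(\WEnv)$ and $\WEnv \subseteq \WEnv'$ then $\obligation[\WEnv]{\EndpointType} = \obligation[\WEnv'']{\EndpointType}$, which follows by an easy induction on $\EndpointType$ using the four clauses of the definition of $\obligation[\cdot]{\cdot}$.

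Now I walk through the cases of the last applied rule. The base cases \refrule{t-idle} and \refrule{t-var} are immediate: \refrule{t-idle} is applicable in any $\WEnv,\UEnv$, and for \refrule{t-var} the assumption $\pvar : \varassoc{\bind{\tilde\Name}{\tilde\etvar}} \in \UEnv$ and $\tilde\etvar : \tilde\timestamp \subseteq \WEnv$ immediately carries over to $\UEnv'$ and $\WEnv'$ by $\UEnv \subseteq \UEnv'$ and $\WEnv \subseteq \WEnv'$. For \refrule{t-par}, \refrule{t-end}, and \refrule{t-session} one simply applies the induction hypothesis to each premise and reassembles the conclusion; for \refrule{t-session} the duality side-condition $\EndpointTypeT \dualr \EndpointTypeS$ is independent of $\WEnv$ and $\UEnv$ and is preserved verbatim.

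The prefix rules \refrule{t-input} and \refrule{t-output} are the interesting ones because they carry the side conditions $\timestampN < \obligation[\WEnv]{\LEnv(\NameV)}$ (and, for \refrule{t-output}, $\timestampN < \obligation[\WEnv]{\EndpointTypeS}$). Here the auxiliary observation above is what is needed: since the session types appearing in $\LEnv$ and $\EndpointTypeS$ had their free type variables already in $\dom(\WEnv)$ for the original derivation to make sense, enlarging to $\WEnv'$ does not change the computed obligations, so the inequalities carry over unchanged, and the induction hypothesis discharges the premise on the continuation.

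The rule \refrule{t-rec} requires mild care because its premise is typed in the enlarged environments $\WEnv + \tilde\etvar : \obligation[\WEnv]{\tilde\EndpointTypeT}$ and $\UEnv + \pvar : \varassoc{\bind{\tilde\Name}{\tilde\etvar}}$. Up to alpha-renaming of $\tilde\etvar$ and $\pvar$, we may assume these new bindings are fresh for $\WEnv'$ and $\UEnv'$, so that $\WEnv + \tilde\etvar : \obligation[\WEnv]{\tilde\EndpointTypeT} \subseteq \WEnv' + \tilde\etvar : \obligation[\WEnv']{\tilde\EndpointTypeT}$ (using the obligation-stability observation) and $\UEnv + \pvar : \varassoc{\bind{\tilde\Name}{\tilde\etvar}} \subseteq \UEnv' + \pvar : \varassoc{\bind{\tilde\Name}{\tilde\etvar}}$. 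The induction hypothesis then applies, and rule \refrule{t-rec} reassembles the conclusion with the same index side-condition $\Id' \leq \Id$. I do not expect any real obstacle here; the only thing one has to be slightly pedantic about is the freshness assumption on bound variables, which is standard, and the invariance of $\obligation[\cdot]{\cdot}$ under extension of the type variable environment.
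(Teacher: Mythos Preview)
Your proposal is correct and is exactly the routine induction one would expect; the paper itself states the lemma without proof, treating it as standard. The only remark is that your alpha-renaming caveat in the \refrule{t-rec} case is not strictly needed, since the overwrite semantics of $+$ already guarantees $\WEnv + \tilde\etvar : \tilde\timestamp \subseteq \WEnv' + \tilde\etvar : \tilde\timestamp$ whenever $\WEnv \subseteq \WEnv'$, but this is harmless.
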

\fi

\begin{lemma}[process substitution]
\label{lem:proc_subst}
Let \hypo1 $\wtp[\Id]{\WEnv', \tilde\etvar : \tilde\timestamp}{\UEnv',
  \pvar : \varassoc{\tilde\Name : \tilde\etvar}}{\LEnv,
  \tilde\Name:\tilde\EndpointTypeT}{\ProcessP}$ and \hypo2
$\wtp[\Id]{\WEnv}{\UEnv}{\tilde\Name:\tilde\EndpointTypeS}{\ProcessQ}$
where $\tilde\timestamp = \obligation[\WEnv]{\tilde\EndpointTypeS}$
and $\WEnv \subseteq \WEnv'$ and $\UEnv \subseteq \UEnv'$.  Then
$\wtp[\Id]{\WEnv'}{\UEnv'}{\LEnv, \tilde\Name :
  \tilde\EndpointTypeT\subst{\tilde\EndpointTypeS}{\tilde\etvar}}{\ProcessP\subst{\ProcessQ}{\pvar}}$.
\end{lemma}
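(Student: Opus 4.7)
The plan is to induct on the derivation of hypothesis~\hypo1 and proceed by case analysis on the last typing rule. Most cases (notably \refrule{t-idle}, \refrule{t-end}, \refrule{t-par}) are routine: apply the induction hypothesis to each premise, possibly using weakening to enlarge $\WEnv$ or $\UEnv$, and reassemble with the same rule. The cases that require attention are \refrule{t-var}, the two prefix rules, and the binder rules \refrule{t-rec} and \refrule{t-session}.

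The case \refrule{t-var} is where hypothesis~\hypo2 is actually consumed. If the typed process variable differs from $\pvar$, the substitution $\ProcessP\subst{\ProcessQ}{\pvar}$ is inert and the conclusion re-applies \refrule{t-var} to the corresponding entry in $\UEnv'$. If it is $\pvar$, inversion of \refrule{t-var} forces $\LEnv = \EmptyEnv$ and $\tilde\EndpointTypeT = \tilde\etvar$, so that $\tilde\EndpointTypeT\subst{\tilde\EndpointTypeS}{\tilde\etvar} = \tilde\EndpointTypeS$ and $\ProcessP\subst{\ProcessQ}{\pvar} = \ProcessQ$; the target judgment is then hypothesis~\hypo2 after a single application of weakening (Lemma~\ref{lem:weakening}) from $\WEnv,\UEnv$ up to $\WEnv',\UEnv'$.

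For \refrule{t-input} and \refrule{t-output} the induction hypothesis types the continuation, but the priority side conditions $\timestampN < \obligation[\WEnv']{\LEnv(\NameV)}$ must be re-established against the substituted types. For this I will prove the following auxiliary identity by a short induction on $\EndpointTypeT$, using contractivity in the $\trec$ case:
\[
\obligation[\WEnv']{\EndpointTypeT\subst{\tilde\EndpointTypeS}{\tilde\etvar}}
=
\obligation[\WEnv',\tilde\etvar:\tilde\timestamp]{\EndpointTypeT}
\quad\text{whenever}\quad
\tilde\timestamp = \obligation[\WEnv']{\tilde\EndpointTypeS}.
\]
Combined with the hypothesis $\tilde\timestamp = \obligation[\WEnv]{\tilde\EndpointTypeS}$ and the inclusion $\WEnv \subseteq \WEnv'$, this lets every priority inequality from the original derivation be transferred to the substituted one.

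The cases \refrule{t-rec} and \refrule{t-session} descend under binders, so I first alpha-rename to ensure that the newly bound process variable differs from $\pvar$, the newly bound type variables are disjoint from $\tilde\etvar$ and $\dom(\WEnv')$, and (for \refrule{t-session}) the bound channel avoids $\fn(\ProcessQ)$ and $\tilde\Name$. The induction hypothesis then applies to the premise and the same rule reassembles the conclusion. The main obstacle throughout is bookkeeping: keeping the three environments aligned under substitution and exploiting the convention that $\tilde\etvar$ occurs free only in the types $\tilde\EndpointTypeT$, which can always be arranged by alpha-renaming the recursive session types bound further out.
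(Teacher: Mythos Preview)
Your proposal is correct and follows essentially the same approach as the paper: induction on the derivation of~\hypo1, with the \refrule{t-var} case discharged by weakening hypothesis~\hypo2, and the prefix and \refrule{t-rec} cases relying on the obligation identity $\obligation[\WEnv']{\EndpointTypeT\subst{\tilde\EndpointTypeS}{\tilde\etvar}} = \obligation[\WEnv',\tilde\etvar:\tilde\timestamp]{\EndpointTypeT}$ to carry the priority constraints through the substitution. The paper invokes this identity inline rather than isolating it as an auxiliary lemma, and is less explicit than you are about alpha-renaming under binders, but the argument is the same.
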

\ifproofs
\begin{proof}
  By induction on the derivation of \hypo1 and by cases on the last
  rule applied. We only show a few interesting cases:

\proofcase{\refrule{t-var} when $\Process = \pvarX$}
Then $\LEnv = \EmptyEnv$, $\tilde\EndpointTypeT = \tilde\etvar$ and we
conclude from \hypo2 with an application of Lemma~\ref{lem:weakening}.

\rproofcase{t-input}
We deduce:
\begin{itemize}
\item $\LEnv, \tilde\Name : \tilde\EndpointTypeT = \LEnv', \Name :
  \In[\timestampM, \timestampN]{\EndpointTypeS}.\EndpointTypeT$;

\item $\ProcessP = \receive\Name\var.\ProcessP'$;

\item $\wtp[\Id]{\WEnv', \tilde\etvar : \tilde\timestamp}{\UEnv', \pvar :
  \varassoc{\tilde\Name : \tilde\etvar}}{\LEnv', \Name
    : \EndpointTypeT, \var : \EndpointTypeS}{\ProcessP'}$;

\item $\timestampN < \obligation[\WEnv', \tilde\etvar :
  \tilde\timestamp]{\LEnv'(\NameV)}$ for every $\NameV \in
  \dom(\LEnv')$.
\end{itemize}

We only consider the case in which $\Name \in \dom(\LEnv)$, the case
$\Name \in \tilde\Name$ being analogous. Then $\LEnv' = \LEnv'',
\tilde\Name : \tilde\EndpointTypeT$ for some $\LEnv''$.
By induction hypothesis we deduce $\wtp[\Id]{\WEnv'}{\UEnv'}{\LEnv''',
  \Name : \EndpointTypeT, \var
  : \EndpointTypeS}{\ProcessP'\subst{\ProcessQ}{\pvar}}$ where
$\LEnv''' = \LEnv'', \tilde\Name :
\tilde\EndpointTypeT\subst{\tilde\EndpointTypeS}{\tilde\etvar}$.
Because of the hypothesis $\tilde\timestamp =
\obligation[\WEnv]{\tilde\EndpointTypeS}$ we also have
$\obligation[\WEnv', \tilde\etvar : \tilde\timestamp]{\LEnv'(\NameV)}
= \obligation[\WEnv']{\LEnv'''(\NameV)}$ for every $\NameV \in
\dom(\LEnv') = \dom(\LEnv''')$.
We conclude $\wtp[\Id]{\WEnv'}{\UEnv'}{\LEnv, \tilde\NameU :
  \tilde\EndpointTypeT\subst{\tilde\EndpointTypeS}{\tilde\etvar}}{\ProcessP\subst{\ProcessQ}{\pvar}}$
with an application of \refrule{t-input}.

\rproofcase{t-rec}
We deduce:
\begin{itemize}
\item $\LEnv, \tilde\Name : \tilde\EndpointTypeT = \tilde\NameV :
  \trec[\Id']\tilde\etvar'.\tilde\EndpointTypeT'$;

\item $\ProcessP = \Rec[\Id']\pvarY.\ProcessP'$;

\item $\wtp[\Id]{(\WEnv', \tilde\etvar : \tilde\timestamp) + \tilde\etvar'
    : \obligation[\WEnv', \tilde\etvar :
      \tilde\timestamp]{\tilde\EndpointTypeT'}}{(\UEnv', \pvarX :
    \varassoc{\tilde\NameU : \tilde\etvar}) + \pvarY :
    \varassoc{\tilde\NameV : \tilde\etvar'}}{\tilde\NameV :
    \tilde\EndpointTypeT'}{\ProcessP'}$;

\item $\Id' \leq \Id$.
\end{itemize}

We only consider the case in which $\pvarX \ne \pvarY$ and
$\tilde\etvar \cap \tilde\etvar' = \emptyset$ and $\tilde\NameU =
\tilde\NameV$ (hence $\LEnv = \emptyset$).
Because of the hypothesis $\tilde\timestamp =
\obligation[\WEnv]{\EndpointTypeS}$ we know that
$\obligation[\WEnv',\tilde\etvar:\tilde\timestamp]{\tilde\EndpointTypeT'}
=
\obligation[\WEnv']{\tilde\EndpointTypeT'\subst{\tilde\EndpointTypeS}{\etvar}}$.
Therefore, by induction hypothesis we deduce $\wtp[\Id]{\WEnv' +
  \tilde\etvar' :
  \obligation[\WEnv']{\tilde\EndpointTypeT'\subst{\tilde\EndpointTypeS}{\tilde\etvar}}}{
  \UEnv' + \pvarY : \varassoc{\tilde\NameU :
    \tilde\etvar'}}{\tilde\NameU :
  \tilde\EndpointTypeT'\subst{\tilde\EndpointTypeS}{\tilde\etvar}}{\ProcessP'\subst{\ProcessQ}{\pvarX}}$.
We conclude $\wtp[\Id]{\WEnv'}{\UEnv'}{\tilde\NameU : (\trec[\Id]
  \tilde\etvar'.\tilde\EndpointTypeT')\subst{\tilde\EndpointTypeS}{\tilde\etvar}}{\ProcessP\subst{\ProcessQ}{\pvar}}$
with an application of \refrule{t-rec}.
\end{proof}
\fi

\begin{lemma}[value substitution]
\label{lem:value_subst}
Let $\wtp[\Id]{\WEnv}{\UEnv}{\LEnv, \var : \EndpointTypeS}{\Process}$ and
$\ep\ChannelC\PolarityQ \not\in \dom(\LEnv)$ and
$\Process\subst{\ep\ChannelC\PolarityQ}{\var}$ is defined. Then
$\wtp[\Id]{\WEnv}{\UEnv}{\LEnv, \ep\ChannelC\PolarityQ
  : \EndpointTypeS}{\Process\subst{\ep\ChannelC\PolarityQ}{\var}}$.
\end{lemma}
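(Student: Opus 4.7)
My plan is to prove the lemma by induction on the derivation of the typing judgement $\wtp[\Id]{\WEnv}{\UEnv}{\LEnv, \var : \EndpointTypeS}{\Process}$, performing a case analysis on the last rule applied. The whole argument exploits the fact that $\obligation[\WEnv]{\cdot}$, the side conditions $\timestampN < \obligation[\WEnv]{\LEnv(\NameV)}$, and duality all depend on the \emph{types} assigned to names, not on the names themselves. Therefore renaming $\var$ to $\ep\ChannelC\PolarityQ$ in both the process and the name environment leaves every such side condition unchanged, so each rule can be reapplied after invoking the induction hypothesis on its premises.

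The routine cases are \refrule{t-idle} (where $\var:\EndpointTypeS$ cannot appear in the empty environment, so this case is vacuous unless combined with \refrule{t-end} and $\EndpointTypeS=\End$), \refrule{t-end} (push the end discard through the substitution), \refrule{t-var} (if $\var\in\tilde\Name$, simply rename in the sequence; the process is a process variable and is unaffected by value substitution), and \refrule{t-session} (since $\var$ is a variable while the bound names are endpoints, $\var\ne\ep\Channel\pm$; the hypothesis that $\Process\subst{\ep\ChannelC\PolarityQ}{\var}$ is defined rules out capture of $\ep\ChannelC\PolarityQ$, so after alpha-renaming $\Channel$ if necessary we apply the induction hypothesis and reapply \refrule{t-session} with the unchanged duality judgement $\EndpointTypeT\dualr\EndpointTypeS$). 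For \refrule{t-par} linearity forces $\var$ into exactly one of $\LEnv_1,\LEnv_2$; we apply the induction hypothesis on that side and reapply the rule, observing that $\ep\ChannelC\PolarityQ\notin\dom(\LEnv_1,\LEnv_2)$ preserves the disjointness requirement.

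The substantive cases are \refrule{t-input} and \refrule{t-output}, where $\var$ may occur either as the subject $\Name$ of the prefix, as the message $\NameV$ being sent (in \refrule{t-output}), or within $\LEnv$. In every subcase the induction hypothesis applies to the continuation, producing the typing of $\Process\subst{\ep\ChannelC\PolarityQ}{\var}$ in the appropriate environment; the priority side conditions carry over verbatim because they read obligations of types that have not changed, and the bound variable $\var$ of the input prefix can be alpha-renamed to avoid clash with $\ep\ChannelC\PolarityQ$ (again by the hypothesis that the substitution is defined). Finally, \refrule{t-rec} follows just as in Lemma~\ref{lem:proc_subst}: substitute $\ep\ChannelC\PolarityQ$ for $\var$ in the sequence $\tilde\Name$ if it occurs there, invoke the induction hypothesis on the body, and reapply the rule with the same index bound $\Id'\le\Id$.

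The only delicate point is bookkeeping around bound-name freshness: one must consistently alpha-convert bound channels and bound variables so that $\ep\ChannelC\PolarityQ$ is neither captured by a $\nu$ in $\Process$ nor collides with a bound input variable. The hypothesis that $\Process\subst{\ep\ChannelC\PolarityQ}{\var}$ is defined, together with the freshness $\ep\ChannelC\PolarityQ\notin\dom(\LEnv)$, makes this bookkeeping straightforward; I would handle it once and for all by assuming, via Barendregt's convention, that bound names in $\Process$ are disjoint from $\{\ep\ChannelC\PolarityQ\}\cup\fn(\LEnv)$ before starting the induction.
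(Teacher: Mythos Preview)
The paper does not supply a proof of this lemma; it is stated and then used in the subject-reduction argument, with the details left implicit as standard. Your induction on the typing derivation is exactly the expected approach, and your observation that every side condition (obligations, priority comparisons, duality) depends only on the \emph{types} in the environment and not on the names is the key insight that makes each case go through.

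One small gap worth tightening: in the \refrule{t-rec} case, if $\var\in\tilde\Name$ then the premise is typed in the process environment $\UEnv + \pvar:\varassoc{\tilde\Name:\tilde\etvar}$, which itself records $\var$. The induction hypothesis, as the lemma is stated, leaves $\UEnv$ untouched, so after applying it you obtain a premise whose name environment has $\ep\ChannelC\PolarityQ$ in place of $\var$ but whose process environment still carries $\var$ inside $\varassoc{\tilde\Name:\tilde\etvar}$; the two sequences no longer agree, and \refrule{t-rec} cannot be reapplied directly (the mismatch resurfaces at the matching \refrule{t-var}). The standard fix is to strengthen the induction hypothesis so that the substitution $\subst{\ep\ChannelC\PolarityQ}{\var}$ is also performed on the name components stored in $\UEnv$; then \refrule{t-var} and \refrule{t-rec} both go through with the two occurrences of $\tilde\Name$ substituted in lockstep. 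This is a routine adjustment and does not affect the rest of your argument.
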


To prove subject reduction one must formulate it for processes which
possibly have free endpoints. In doing so, it is necessary to impose,
on the name environment used for typing such processes, that it enjoys
a basic form of balancing, whereby the peer endpoints of the same
session are associated with dual types. Formally:

\begin{definition}[balanced context]
\label{def:balanced}
We say that $\LEnv$ is \emph{balanced} if $\ep\Channel\Polarity,
\ep\Channel{\co\Polarity} \in \dom(\LEnv)$ implies
$\LEnv(\ep\Channel\Polarity) \dualr \LEnv(\ep\Channel{\co\Polarity})$.
\end{definition}

\ifproofs
It is also necessary to determine an accurate correspondence between
the name environment \emph{before} the reduction, and the name
environment \emph{after} the reduction. For this reason, we define a
reduction relation also for environments which takes into account the
possible changes that can occur to the types in its range: either a
recursive type is unfolded, or two corresponding actions from types
associated with peer endpoints annihilate each other as the result of
a communication.

\begin{definition}[context reduction]
\label{def:context_reduction}
\emph{Context reduction} is the least relation $\red$ defined by the
rules
\[
\inferrule{}{
  \ep\Channel\Polarity : \trec[\Id+1]\etvar.\EndpointTypeT
  \red
  \ep\Channel\Polarity : \EndpointTypeT\subst{\trec[\Id]\etvar.\EndpointTypeT}{\etvar}
}
\qquad
\inferrule{}{
  \ep\Channel\Polarity : \Out[\timestampM,
  \timestampN]{\EndpointTypeS}.\EndpointTypeT,
  \ep\Channel{\co\Polarity} : \In[\timestampN,
  \timestampM]{\EndpointTypeS}.\EndpointTypeT'
  \red
  \ep\Channel\Polarity : \EndpointTypeT,
  \ep\Channel{\co\Polarity} : \EndpointTypeT'
}
\]
and closed by context composition.
\end{definition}

Context reductions preserve balancing.

\begin{lemma}
  Let $\LEnv$ be balanced and $\LEnv \red \LEnv'$. Then $\LEnv'$ is
  also balanced.
\end{lemma}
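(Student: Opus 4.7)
The plan is to proceed by induction on the derivation of $\LEnv \red \LEnv'$, which reduces to a case analysis on the two axiom schemes for context reduction extended by the context-composition closure that simply leaves additional bindings untouched. Since balancedness is a pointwise condition on peer pairs $\ep\Channel\Polarity, \ep\Channel{\co\Polarity} \in \dom(\LEnv')$, it suffices to verify that each such pair still has dual types, distinguishing whether neither, one, or both of the peers were affected by the reduction step. Bindings untouched by the reduction inherit duality verbatim from $\LEnv$.

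The unfolding axiom changes only the type of a single endpoint $\ep\Channel\Polarity$ from $\trec[\Id+1]\etvar.\EndpointTypeT$ to $\EndpointTypeT\subst{\trec[\Id]\etvar.\EndpointTypeT}{\etvar}$, so the only peer pair that needs rechecking is the one involving $\ep\Channel\Polarity$, and only when the peer $\ep\Channel{\co\Polarity}$ lies in $\dom(\LEnv)$. In that case balancedness of $\LEnv$ yields $\LEnv(\ep\Channel{\co\Polarity}) \dualr \trec[\Id+1]\etvar.\EndpointTypeT$, and one application of the symmetric form of \refrule{d-unfold} delivers $\LEnv(\ep\Channel{\co\Polarity}) \dualr \EndpointTypeT\subst{\trec[\Id]\etvar.\EndpointTypeT}{\etvar}$, as required.

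The communication axiom is where the main obstacle lies. From the hypothesis $\Out[\timestampM, \timestampN]{\EndpointTypeS}.\EndpointTypeT \dualr \In[\timestampN, \timestampM]{\EndpointTypeS}.\EndpointTypeT'$ supplied by balancedness, I need to deduce $\EndpointTypeT \dualr \EndpointTypeT'$ so that the residual peer endpoints in $\LEnv'$ remain dual. A direct inversion is not immediate, because \refrule{d-unfold} can also have produced the hypothesis by unfolding a $\trec$-term whose body happens to start with the matching prefix. I will therefore establish, by induction on the derivation of duality, an auxiliary inversion lemma stating that whenever $\Out[\timestampM, \timestampN]{\EndpointTypeS}.\EndpointTypeT \dualr \In[\timestampN, \timestampM]{\EndpointTypeS}.\EndpointTypeT'$ holds, so does $\EndpointTypeT \dualr \EndpointTypeT'$: the \refrule{d-prefix} cases (and their symmetric counterparts) are immediate, while in the \refrule{d-unfold} cases I peel off one layer of substitution, exploiting contractivity of session types together with the stratification constraint that message payloads are closed, in order to match the outermost prefixes of the unfolded type syntactically and then invoke the inductive hypothesis on a strictly smaller derivation.
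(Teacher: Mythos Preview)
Your treatment of the unfolding axiom is exactly the case the paper proves, with the same appeal to \refrule{d-unfold}; the paper in fact shows only that case and leaves the communication axiom implicit. You go further by spelling out the communication case and correctly isolating the real work: from $\Out[\timestampM,\timestampN]{\EndpointTypeS}.\EndpointTypeT \dualr \In[\timestampN,\timestampM]{\EndpointTypeS}.\EndpointTypeT'$ one must extract $\EndpointTypeT \dualr \EndpointTypeT'$, and \refrule{d-unfold} prevents a one-step inversion.

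There is, however, a small gap in your induction sketch for the inversion lemma. If the last rule is \refrule{d-unfold} (or its symmetric form), the premise has shape
\[
\Out[\timestampM,\timestampN]{\EndpointTypeS}.\EndpointTypeT \;\dualr\; \trec[\Id+1]\etvar.\In[\timestampN,\timestampM]{\EndpointTypeS}.\EndpointTypeT_0'
\]
(using stratification and contractivity exactly as you say to expose the inner prefix). But this premise is \emph{not} an instance of your inversion statement, which requires both sides to be prefix types; hence the inductive hypothesis does not apply as written. To make the induction go through you must strengthen the statement, for example to: whenever a prefix type is $\dualr$-related to an arbitrary type, the latter unfolds (via finitely many \refrule{d-unfold} steps) to the complementary prefix with a continuation dual to the original one. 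With that generalisation the \refrule{d-unfold} case is a genuine inductive step and the \refrule{d-prefix} case closes the argument. The paper does not spell out any of this, so your more careful treatment is an improvement once the hypothesis is suitably generalised.
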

\begin{proof}
  By considering the two cases corresponding to the two possible
  reductions that can occur to $\LEnv$
  (Definition~\ref{def:context_reduction}). We show one of them.
  Suppose $\LEnv = \LEnv'', \ep\ChannelA\Polarity : \trec[\Id+1]
  \etvar.\EndpointTypeT \red \LEnv'', \ep\ChannelA\Polarity
  : \EndpointTypeT\subst{\trec[\Id]\etvar.\EndpointTypeT}{\etvar} =
  \LEnv'$ and that $\ep\Channel{\co\Polarity} \in \dom(\LEnv'')$.
  From the hypothesis that $\LEnv$ is balanced we deduce
  $\LEnv''(\ep\Channel{\co\Polarity}) \dualr \trec[\Id+1]
  \etvar.\EndpointTypeT$.
  We conclude $\LEnv''(\ep\Channel{\co\Polarity})
  \dualr \EndpointTypeT\subst{\trec[\Id]\etvar.\EndpointTypeT}{\etvar}$
  by an application of \refrule{d-unfold}.
\end{proof}

The property that typing is preserved by structural congruence is
obvious.

\begin{lemma}
\label{lem:struct}
Let $\wtp[\Id]{\WEnv}{\UEnv}{\LEnv}{\ProcessP}$ and $\ProcessP \equiv
\ProcessQ$. Then $\wtp[\Id]{\WEnv}{\UEnv}{\LEnv}{\ProcessQ}$.
\end{lemma}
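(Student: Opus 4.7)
I would proceed by induction on the derivation of $\ProcessP \equiv \ProcessQ$. The cases for reflexivity, symmetry, transitivity and the congruence rules are routine: in the symmetry case one also needs the converse direction for each axiom, so I would in fact prove the stronger statement that $\ProcessP \equiv \ProcessQ$ implies $\wtp[\Id]{\WEnv}{\UEnv}{\LEnv}{\ProcessP}$ iff $\wtp[\Id]{\WEnv}{\UEnv}{\LEnv}{\ProcessQ}$. Alpha renaming of bound variables, bound channels, and bound process variables is absorbed silently since the typing rules identify terms up to $\alpha$-equivalence and every binder in the typing rules can be chosen fresh.

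The substantive work lies in checking each axiom of Table~\ref{tab:congruence}. For \refrule{s-par 2} and \refrule{s-par 3}, inversion of the derivation gives a partition of $\LEnv$ through \refrule{t-par}; commutativity and associativity of `,' (granted in the excerpt) let me rebuild the derivation on the other side. For \refrule{s-res 1}, inversion gives two nested applications of \refrule{t-session} introducing the bindings $\ep\ChannelA\pm$ and $\ep\ChannelB\pm$ with the corresponding duality premises; since the four endpoints are distinct, the order of the two applications can be swapped and the premises reused verbatim.

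The trickiest axioms are \refrule{s-par 1} and \refrule{s-res 2}. For \refrule{s-par 1}, typing $\idle \parop \ProcessP$ with $\LEnv$ gives by inversion a split $\LEnv = \LEnv_0, \LEnv_1$ with $\wtp[\Id]{\WEnv}{\UEnv}{\LEnv_0}{\idle}$ and $\wtp[\Id]{\WEnv}{\UEnv}{\LEnv_1}{\ProcessP}$. The only way to type $\idle$ is by \refrule{t-idle} followed by iterated applications of \refrule{t-end}, so all names in $\LEnv_0$ have type $\End$. I then reintroduce those same $\End$-typed bindings on top of the derivation of $\ProcessP$ via \refrule{t-end} to obtain $\wtp[\Id]{\WEnv}{\UEnv}{\LEnv}{\ProcessP}$. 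The converse direction composes $\wtp[\Id]{\WEnv}{\UEnv}{\LEnv}{\ProcessP}$ with \refrule{t-idle} in the empty environment using \refrule{t-par}.

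For \refrule{s-res 2}, the forward direction inverts \refrule{t-par} to split $\LEnv$ into $\LEnv_1$ (for $\new\ChannelA \ProcessP$) and $\LEnv_2$ (for $\ProcessQ$), then inverts \refrule{t-session} to obtain a derivation of $\ProcessP$ under $\LEnv_1, \ep\ChannelA+ : \EndpointTypeT, \ep\ChannelA- : \EndpointTypeS$ with $\EndpointTypeT \dualr \EndpointTypeS$. I recompose via \refrule{t-par} with $\ProcessQ$ (whose context $\LEnv_2$ cannot mention $\ep\ChannelA\pm$ by the side condition $\ep\ChannelA\pm \notin \fn(\ProcessQ)$, so there is no name clash), then close with \refrule{t-session} using the same duality witness. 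The converse is symmetric, using the same side condition to split the endpoints $\ep\ChannelA\pm$ exclusively on the $\ProcessP$ side of \refrule{t-par}. The main obstacle across the whole proof is precisely this bookkeeping of side conditions together with the fact that $\idle$ absorbs arbitrary $\End$-typed bindings, which one must recognize in order not to lose or invent names when moving between the two sides of each axiom.
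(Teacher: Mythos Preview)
Your argument is correct and is precisely the ``standard induction on $\ProcessP \equiv \ProcessQ$'' that the paper invokes without giving any further detail, so your case analysis is strictly more explicit than the original. One minor nit in your handling of \refrule{s-res 2}: the side condition $\ep\ChannelA\pm \notin \fn(\ProcessQ)$ does not by itself force $\ep\ChannelA\pm \notin \dom(\LEnv_2)$, since \refrule{t-end} can place $\End$-typed endpoints in the environment that are not free in the typed process; this is exactly where your earlier appeal to $\alpha$-renaming (choosing $\ChannelA$ fresh for all of $\LEnv$) is doing the real work, and you should cite that rather than the $\fn$ side condition.
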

\begin{proof}
  Standard induction on $\ProcessP \equiv \ProcessQ$.
\end{proof}
\fi

\begin{theorem}[subject reduction]
\label{thm:sr}
Let $\wtp[\Id]{}{}{\LEnv}{\ProcessP}$ and $\LEnv$ balanced and
$\ProcessP \red \ProcessQ$.  Then $\wtp[\Id]{}{}{\LEnv'}{\ProcessQ}$
\ifproofs
for some $\LEnv'$ such that $\LEnv \red^* \LEnv'$.
\else
for some $\LEnv'$.
\fi
\end{theorem}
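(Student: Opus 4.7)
The plan is to proceed by induction on the derivation of $\ProcessP \red \ProcessQ$, with a case analysis on the last reduction rule applied. The two context rules \refrule{r-res} and \refrule{r-par} are immediate from the induction hypothesis together with inversion on \refrule{t-session} and \refrule{t-par} respectively; note that restricting or extending $\LEnv$ preserves balancing, and the context reduction lifts through the same rules. The rule \refrule{r-struct} is handled by Lemma~\ref{lem:struct} (typing is preserved by structural congruence), after which we apply the induction hypothesis.

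The two interesting axioms are \refrule{r-rec} and \refrule{r-comm}. For \refrule{r-rec}, we have $\Rec[\Id+1]\pvar.\Process \red \Process\subst{\Rec[\Id]\pvar.\Process}{\pvar}$. Inversion on \refrule{t-rec} yields that $\LEnv = \tilde\Name : \trec[\Id+1]\tilde\etvar.\tilde\EndpointTypeT$ with $\wtp[\Id]{\tilde\etvar : \obligation[\WEnv]{\trec[\Id+1]\tilde\etvar.\tilde\EndpointTypeT}}{\pvar : \varassoc{\tilde\Name:\tilde\etvar}}{\tilde\Name : \tilde\EndpointTypeT}{\Process}$. Applying Lemma~\ref{lem:proc_subst} with $\ProcessQ = \Rec[\Id]\pvar.\Process$ yields the typing of the unfolded body under the context $\tilde\Name : \tilde\EndpointTypeT\subst{\trec[\Id]\tilde\etvar.\tilde\EndpointTypeT}{\tilde\etvar}$, which is exactly $\LEnv'$ obtained from $\LEnv$ by one step of context reduction (the unfolding rule of Definition~\ref{def:context_reduction}). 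One must check that the obligations computed under the old and new type-variable environments agree, which follows because $\obligation[\WEnv]{\trec[\Id+1]\tilde\etvar.\tilde\EndpointTypeT} = \obligation[\WEnv]{\trec[\Id]\tilde\etvar.\tilde\EndpointTypeT}$ by definition.

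The main obstacle is \refrule{r-comm}. Here the source process is $\send{\ep\ChannelA\PolarityP}{\ep\ChannelC\PolarityQ}.\ProcessP \parop \receive{\ep\ChannelA{\co\PolarityP}}{\var}.\ProcessQ$. By repeated inversions on \refrule{t-par}, \refrule{t-output}, and \refrule{t-input}, we recover that $\LEnv$ assigns to $\ep\ChannelA\PolarityP$ some type $\Out[\timestampM,\timestampN]{}{\EndpointTypeS}.\EndpointTypeT$ and to $\ep\ChannelA{\co\PolarityP}$ some type $\In[\timestampM',\timestampN']{}{\EndpointTypeS'}.\EndpointTypeT'$; the balancing hypothesis together with \refrule{d-prefix} forces $\EndpointTypeS = \EndpointTypeS'$, $\timestampM = \timestampN'$, $\timestampN = \timestampM'$, and $\EndpointTypeT \dualr \EndpointTypeT'$, so the priorities recorded on the two peer endpoints are exactly swapped, as expected. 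Inversion also provides the typing of the continuations $\ProcessP$ and $\ProcessQ$ with the prefixes stripped and $\var$ bound to $\EndpointTypeS$. Applying Lemma~\ref{lem:value_subst} to substitute $\ep\ChannelC\PolarityQ$ for $\var$ in $\ProcessQ$ (possibly after alpha-renaming to ensure the substitution is defined) and reassembling with \refrule{t-par} yields the typing of $\ProcessP \parop \ProcessQ\subst{\ep\ChannelC\PolarityQ}{\var}$ under the context $\LEnv'$ obtained by the second context-reduction rule; this $\LEnv'$ remains balanced, and since $\EndpointTypeT \dualr \EndpointTypeT'$, the residual context continues to satisfy the hypothesis for further reduction steps.

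The delicate bookkeeping in the communication case concerns the priority side-conditions of \refrule{t-input} and \refrule{t-output}: after consuming the prefix, the continuation carries a strictly smaller set of name-bindings, all of which had obligations strictly greater than $\timestampN$ in the input case and strictly greater than $\timestampN$ in the output case, so no side-condition is violated. The substitution of $\ep\ChannelC\PolarityQ$ for $\var$ preserves obligations because $\var$ and $\ep\ChannelC\PolarityQ$ carry the same type $\EndpointTypeS$. These pointwise checks are routine but must be carried out to confirm that every use of \refrule{t-input} and \refrule{t-output} in the reassembled derivation still discharges its priority premises.
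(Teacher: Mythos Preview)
Your proposal is correct and follows essentially the same route as the paper: induction on the reduction derivation, handling \refrule{r-comm} via inversion plus Lemma~\ref{lem:value_subst} and \refrule{r-rec} via inversion plus Lemma~\ref{lem:proc_subst}, with the context rules and \refrule{r-struct} dispatched by the induction hypothesis and Lemma~\ref{lem:struct}. Two small remarks: in the \refrule{r-rec} case the paper first strips any $\End$-typed bindings (via \refrule{t-end}) so that inversion on \refrule{t-rec} actually yields $\LEnv = \tilde\Name : \trec[\Id'+1]\tilde\etvar.\tilde\EndpointTypeT$, and your final paragraph about re-checking priority side-conditions is unnecessary, since the typings of the continuations come directly from the premises of \refrule{t-input} and \refrule{t-output} and need no further verification.
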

\ifproofs
\begin{proof}
  By induction on the derivation of $\ProcessP \red \ProcessQ$ and by
  cases on the last rule applied. We only focus on the two base cases,
  the remaining ones follow by a simple induction argument and
  possibly Lemma~\ref{lem:struct}.

\rproofcase{r-comm}
Then $\ProcessP = 
    \send{\ep\ChannelA\PolarityP}{\ep\ChannelC\PolarityQ}.\ProcessP'
    \parop
    \receive{\ep\ChannelA{\co\PolarityP}}{\var}.\ProcessQ'
    \red
    \ProcessP'
    \parop
    \ProcessQ'\subst{\ep\ChannelC\PolarityQ}{\var} = \ProcessQ$.
    From the hypothesis $\wtp[\Id]{}{}{\LEnv}{\ProcessP}$ and
    \refrule{t-par} we deduce:
\begin{itemize}
\item $\LEnv = \LEnv_1, \LEnv_2$;

\item
  $\wtp[\Id]{}{}{\LEnv_1}{\send{\ep\ChannelA\PolarityP}{\ep\ChannelC\PolarityQ}.\ProcessP'}$;

\item
  $\wtp[\Id]{}{}{\LEnv_2}{\receive{\ep\ChannelA{\co\PolarityP}}{\var}.\ProcessQ'}$.
\end{itemize}

From \refrule{t-output} we deduce:
\begin{itemize}
\item $\LEnv_1 = \LEnv_1', \ep\ChannelA\PolarityP : \Out[\timestampM,
  \timestampN]{\EndpointTypeS}.\EndpointTypeT, \ep\ChannelC\PolarityQ
  : \EndpointTypeS$;

\item $\wtp[\Id]{}{}{\LEnv_1', \ep\ChannelA\PolarityP
    : \EndpointTypeT}{\ProcessP'}$.
\end{itemize}

From \refrule{t-input} we deduce:
\begin{itemize}
\item $\LEnv_2 = \LEnv_2', \ep\ChannelA{\co\PolarityP} :
  \In[\timestampM', \timestampN']{\EndpointTypeS'}.\EndpointTypeT'$;

\item $\wtp[\Id]{}{}{\LEnv_2', \ep\ChannelA{\co\PolarityP} : \EndpointTypeT',
    \var : \EndpointTypeS'}{\ProcessQ'}$.
\end{itemize}

From the hypothesis that $\LEnv$ is balanced we also deduce that
$\EndpointTypeS' = \EndpointTypeS$ and $\EndpointTypeT \dualr
\EndpointTypeT'$.
By definition of $\LEnv$ we know that $\ep\ChannelC\PolarityQ \not\in
\dom(\LEnv_2')$.
By Lemma~\ref{lem:value_subst} we obtain $\wtp[\Id]{}{}{\LEnv_2',
  \ep\ChannelA{\co\PolarityP} : \EndpointTypeT',
  \ep\ChannelC\PolarityQ
  : \EndpointTypeS}{\ProcessQ'\subst{\ep\ChannelC\PolarityQ}{\var}}$.
Let $\LEnv' = \LEnv_1', \ep\ChannelA\PolarityP : \EndpointTypeT,
\LEnv_2', \ep\ChannelA{\co\PolarityP} : \EndpointTypeT',
\ep\ChannelC\PolarityQ : \EndpointTypeS$ and observe that $\LEnv \red
\LEnv'$.
We conclude $\wtp[\Id]{}{}{\LEnv'}{\ProcessQ}$ with an application of
\refrule{t-par}.

\rproofcase{r-rec}
Then $\ProcessP = \Rec[\Id+1]\pvar.\ProcessP' \red
\ProcessP'\subst{\Rec[\Id]\pvar.\ProcessP'}{\pvar} = \ProcessQ$.
Because of \refrule{t-end} we can assume that $\LEnv$ does not contain
bindings for endpoints with type $\End$.  Under this assumption, from
the hypothesis $\wtp[\Id]{}{}{\LEnv}{\ProcessP}$ and \refrule{t-rec}
we deduce:
\begin{itemize}
\item $\LEnv = \tilde\Name : \trec[\Id+1]
  \tilde\etvar.\tilde\EndpointTypeT$;

\item $\wtp[\Id]{\tilde\etvar : \obligation{\tilde\EndpointTypeT}}{\pvar :
    \varassoc{\tilde\Name : \tilde\etvar}}{\tilde\Name :
    \tilde\EndpointTypeT}{\ProcessP'}$.
\end{itemize}

From the hypothesis $\wtp[\Id]{}{}{\LEnv}{\ProcessP}$ we also deduce
$\wtp[\Id]{}{}{\tilde\NameU : \trec[\Id]
  \tilde\etvar.\tilde\EndpointTypeT}{\Rec[\Id]\pvar.\ProcessP'}$.
Note that $\obligation{\tilde\EndpointTypeT} = \obligation{\trec[\Id]
  \tilde\etvar.\tilde\EndpointTypeT}$. 
Let $\LEnv' = \tilde\Name : \tilde\EndpointTypeT\subst{\trec[\Id]
  \tilde\etvar.\tilde\EndpointTypeT}{\tilde\etvar}$ and observe that
$\LEnv \red^* \LEnv'$.
We conclude $\wtp[\Id]{}{}{\LEnv'}{\ProcessQ}$ by Lemma~\ref{lem:proc_subst}.
\end{proof}
\fi

%%% Local Variables: 
%%% mode: latex
%%% TeX-master: "main"
%%% End: 

\paragraph{Roadmap to Soundness.}
We sketch the proof that the type system is sound, namely that every
well-typed process $\ProcessP$ enjoys the progress property. According
to Definition~\ref{def:progress}, this amounts to showing that for
every $\ProcessP'$ such that
\[
\ProcessP \red^* \ProcessP'
\]
every top-level prefix involving some endpoint $\ep\Channel\Polarity$
in $\ProcessP'$ is eventually consumed by a matching prefix involving
the peer endpoint $\ep\Channel{\co\Polarity}$.
In this respect, what is difficult to prove is the existence of a
reduction sequence starting from $\ProcessP'$ that eventually exposes
the matching prefix, because the peer endpoint
$\ep\Channel{\co\Polarity}$ may be guarded by a number of prefixes
involving other endpoints. In fact, in $\Process'$ the endpoint
$\ep\Channel{\co\Polarity}$ may also be ``in transit'' as a message
exchanged within other sessions, hence the soundness proof should in
principle follow all the delegations of $\ep\Channel{\co\Polarity}$
until $\ep\Channel{\co\Polarity}$ becomes the subject of another
top-level prefix.

Instead of attempting this, we follow a radically different
strategy. First of all, we observe that a well-typed process in normal
form cannot have top-level prefixes (Lemma~\ref{lem:stability}).
The idea then is to prolong the derivation from $\Process'$ to some
$\Process''$ such that
\[
\Process \red^* \Process' \red^* \Process'' \nred
\]
and to conclude that the top-level prefix with subject
$\ep\Channel\Polarity$ in $\Process'$ must have been consumed by a
matching prefix that has emerged along the reduction from $\Process'$
to $\Process''$. Unfortunately, it is not always possible to find such
a $\Process''$ because in general well-typed processes (like
$\Process'$) are not weakly normalizing. However, since $\Process'$ is
a residual of $\Process$ after a \emph{finite} number of reductions,
it is possible to find a \emph{finite approximation} $\ProcessQ \in
\FiniteProcessSet$ of $\ProcessP$ that reduces to a finite
approximation $\ProcessQ' \in \FiniteProcessSet$ of
$\ProcessP'$. Since any process in $\FiniteProcessSet$ can be shown to
be strongly normalizing
\ifproofs
(Corollary~\ref{cor:sn}),
\else
(Theorem~\ref{thm:sn}),
\fi
then there exists a normal form $\ProcessQ''$ that approximates
$\ProcessP''$.
The strategy is summarized by the following diagram
\[
\begin{array}{cccccc}
  \ProcessP & \red^* & \ProcessP' & \red^* & \ProcessP'' & \\
  \rotatebox[origin=c]{90}{$\asub$} & &
  \rotatebox[origin=c]{90}{$\asub$} & &
  \rotatebox[origin=c]{90}{$\asub$} \\
  \ProcessQ & \red^* & \ProcessQ' & \red^* & \ProcessQ'' & \nred
\end{array}
\]
where $\asub$ denotes some approximation relation.
Note that $\ProcessP''$ is not, in general, in normal form. However,
we will define $\asub$ in such a way that a user process and its
approximation share the same structure, except that the approximation
has finite indices marking the $\Rec[\infty]{}$ terms. Therefore, if
$\ProcessQ''$ has no top-level prefix, then so does $\ProcessP''$.

%%% Local Variables: 
%%% mode: latex
%%% TeX-master: "main"
%%% End: 

\paragraph{Approximations.}
Intuitively we say that $\ProcessP$ approximates $\ProcessQ$ if
$\ProcessP$ and $\ProcessQ$ share the same overall structure, except
that every recursion in $\ProcessQ$ is capable of at least as many
unfoldings as the corresponding recursion in $\ProcessP$. We formalize
this notion by means of an order between processes:

\begin{definition}[approximation]
  The $\asub$ be the least pre-congruence over processes induced by
  the rule
\[
\inferrule{
  \Id \leq \Id'
  \\
  \ProcessP \asub \ProcessQ
}{
  \Rec[\Id]\pvar.\ProcessP
  \asub
  \Rec[\Id']\pvar.\ProcessQ
}
\]

We say that $\ProcessP$ \emph{approximates} $\ProcessQ$ if $\ProcessP
\asub \ProcessQ$.
\end{definition}

The following Proposition establishes a simulation result between a
process and its approximations. In particular, the reductions of the
approximated process include those of its approximations.

\begin{proposition}
\label{prop:approx_red}
Let $\ProcessP \red^* \ProcessP'$ and $\ProcessP \asub \ProcessQ$.
Then there exists $\ProcessQ'$ such that $\ProcessQ \red^* \ProcessQ'$
and $\ProcessP' \asub \ProcessQ'$.
\end{proposition}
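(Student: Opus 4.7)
The plan is to first reduce the statement to a one-step simulation: by induction on the length of $\ProcessP \red^* \ProcessP'$, it suffices to show that $\ProcessP \red \ProcessP'$ and $\ProcessP \asub \ProcessQ$ imply the existence of some $\ProcessQ'$ with $\ProcessQ \red^* \ProcessQ'$ and $\ProcessP' \asub \ProcessQ'$. The base case is immediate by reflexivity of $\asub$.

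The one-step simulation is then proved by induction on the derivation of $\ProcessP \red \ProcessP'$, resting on three auxiliary facts about $\asub$ that follow directly from its definition as a pre-congruence generated by a single rule on $\Rec{}$: (i) an \emph{inversion} lemma, namely that $\asub$ only genuinely discriminates on $\Rec{}$ terms, so whenever $\ProcessP \asub \ProcessQ$ and $\ProcessP$ is not a $\Rec{}$ term then $\ProcessQ$ shares its top-level constructor, with componentwise $\asub$-related subterms; (ii) a \emph{process substitution} lemma saying that $\ProcessP_1 \asub \ProcessQ_1$ and $\ProcessP_2 \asub \ProcessQ_2$ imply $\ProcessP_1\subst{\ProcessP_2}{\pvar} \asub \ProcessQ_1\subst{\ProcessQ_2}{\pvar}$, shown by straightforward induction on the first $\asub$ derivation; and (iii) compatibility with structural congruence: if $\ProcessP \equiv \ProcessP'$ and $\ProcessP \asub \ProcessQ$ then $\ProcessP' \asub \ProcessQ'$ for some $\ProcessQ' \equiv \ProcessQ$, which holds because none of the laws in Table~\ref{tab:congruence} manipulates recursion indices.

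Given these, the context rules \refrule{r-par} and \refrule{r-res} dispatch via (i) plus the inductive hypothesis; \refrule{r-struct} dispatches via (iii); and \refrule{r-comm} dispatches via (i) applied to each of the parallel prefixes, followed by a trivial value substitution (since what is substituted is an endpoint, not a process, and $\asub$ ignores endpoints). The delicate case is \refrule{r-rec}, where $\ProcessP = \Rec[\Id+1]\pvar.\ProcessP_1$ reduces to $\ProcessP_1\subst{\Rec[\Id]\pvar.\ProcessP_1}{\pvar}$. By (i), $\ProcessQ = \Rec[\Id']\pvar.\ProcessQ_1$ with $\Id+1 \leq \Id'$ and $\ProcessP_1 \asub \ProcessQ_1$. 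Since $\Id' \geq 1$, I can write $\Id' = m+1$ for some index $m \geq \Id$, taking $m = \infty$ when $\Id' = \infty$ and exploiting the convention $\infty + 1 = \infty$. Then $\ProcessQ \red \ProcessQ_1\subst{\Rec[m]\pvar.\ProcessQ_1}{\pvar}$, and (ii) applied to $\ProcessP_1 \asub \ProcessQ_1$ and $\Rec[\Id]\pvar.\ProcessP_1 \asub \Rec[m]\pvar.\ProcessQ_1$ yields the required approximation between the residuals.

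The main obstacle is the bookkeeping of indices in the \refrule{r-rec} case: one has to verify that the decomposition $\Id' = m+1$ is well-defined uniformly on natural numbers and on $\infty$, and that this $m$ is actually large enough to keep the approximation invariant through the unfolding. Everything else is routine, because the pre-congruence rule affects no other syntactic form and therefore all other cases commute with $\asub$ essentially for free.
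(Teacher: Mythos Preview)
Your proposal is correct and follows essentially the same approach as the paper, which only offers a one-line sketch (``an easy induction on the derivation of $\ProcessP \red^* \ProcessP'$, using the fact that $\ProcessQ$ in general allows more unfoldings of its own recursions''). Your write-up is simply a faithful unpacking of that sketch: the outer induction on the length of $\red^*$, the inner case analysis on the single-step rule, and the index bookkeeping in the \refrule{r-rec} case are exactly what the paper leaves implicit.
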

\ifproofs
\begin{proof}
  An easy induction on the derivation of $\ProcessP \red^*
  \ProcessP'$, using the fact that $\ProcessQ$ in general allows more
  unfoldings of its own recursions compared to $\ProcessP$.
\end{proof}
\fi

Our strategy for proving soundness relies on the ability to compute
one particular approximation of an arbitrary user process $\Process$.

\begin{definition}[$\Id$-approximant]
  The \emph{$\Id$-approximant} of a user process $\Process$, written
  $\approximate{\Process}{\Id}$, is obtained by turning every
  $\Rec[\infty]{}$ in $\Process$ to a $\Rec[\Id]{}$.
  We similarly define the $\Id$-approximant
  $\approximate{\EndpointTypeT}{\Id}$ of a session type
  $\EndpointTypeT$.
\end{definition}

An essential assumption of the strategy is that each $\Id$-approximant
of a well-typed user process is itself well typed. Unfortunately, this
is not always the case and we must slightly restrict the class of
well-typed user processes for which we are able to prove progress
using this strategy.  To illustrate the issue, consider the user
process
\[
\Process \eqdef
\new\ChannelA
%\new\ChannelB
(
\send{\ep\ChannelA{+}}{3}.
\Rec[\infty]\pvarX.
%\send{\ep\ChannelB{+}}{4}.
\send{\ep\ChannelA{+}}{3}.
\pvarX
\parop
\Rec[\infty]\pvarY.
\receive{\ep\ChannelA{-}}{\varX}.
%\receive{\ep\ChannelB{-}}{\varY}.
\pvarY
)
\]
which is well typed using the name environment $\ep\ChannelA+
: \EndpointTypeT, \ep\ChannelA- : \EndpointTypeS$ where
\[
\EndpointTypeT \eqdef
\Out[\timestampM,\timestampN]{}\tint.\trec[\infty]\etvar.\Out[\timestampM,\timestampN]{}\tint.\etvar
\qquad\text{and}\qquad
\EndpointTypeS \eqdef
\trec[\infty]\etvar.\In[\timestampN,\timestampM]{}\tint.\etvar
\]
In particular, observe that the type $\EndpointTypeT$ associated with
$\ep\ChannelA+$ has been unfolded to account for the fact that in
$\Process$ the endpoint $\ep\ChannelA+$ is used once outside of the
recursion. Consequently the proof of $\EndpointTypeT
\dualr \EndpointTypeS$ crucially relies on \refrule{d-unfold}
(Definition~\ref{def:duality}) for dealing with this unfolding.
Now, it is easy to see that no $n$-approximant of $\Process$ is well
typed. In particular, it is not the case that
$\approximate\EndpointTypeT{n} \dualr \approximate\EndpointTypeS{n}$
because \refrule{d-unfold} attempts to relate
$\approximate\EndpointTypeS{n}$ with the \emph{folding} of
$\approximate\EndpointTypeT{n}$, namely
$\trec[n+1]\etvar.\Out[\timestampM,\timestampN]\tint.\etvar$.
In this particular case one could find a more clever approximation of
$\Process$ where the leftmost $\Rec{}$ is assigned index $n$ and the
rightmost one index $n+1$. However, because several endpoints can be
used within the same recursion, it is possible to find other examples
where \emph{no index assignment} makes the process typable with
finite indices.

In general, typability of every approximant of $\Process$ is
guaranteed if $\Process$ is typable without ever using
\refrule{d-unfold} for relating dual session types. This is the case
if $\wtp[0]{}{}{}{\approximate\Process{0}}$.

\begin{proposition}
\label{prop:approx_wt}
Let $\Process$ be a user process such that
$\wtp[0]{}{}{}{\approximate\Process{0}}$. Then
$\wtp[\Id]{}{}{}{\approximate\Process\Id}$ for every $\Id$.
\end{proposition}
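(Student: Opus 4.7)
My plan is to transform the given derivation $\pi_0$ of $\wtp[0]{}{}{}{\approximate\Process{0}}$ into a derivation $\pi_\Id$ of $\wtp[\Id]{}{}{}{\approximate\Process\Id}$ by uniformly \emph{lifting} every occurrence of $\trec[0]$ in the session types used in $\pi_0$ to $\trec[\Id]$. Concretely, I define a syntactic map $\phi_\Id$ on session types that rewrites each subterm $\trec[0]\etvar.\EndpointTypeS$ into $\trec[\Id]\etvar.\phi_\Id(\EndpointTypeS)$ and acts homomorphically on the other constructors, so finite indices $\trec[n]$ with $n\geq 1$, the index $\trec[\infty]$, and all priority annotations are preserved. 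I then extend $\phi_\Id$ pointwise to the environments $\WEnv$, $\UEnv$, $\LEnv$. The target derivation will have the same tree shape as $\pi_0$, with every session-type binding replaced by its $\phi_\Id$-image; since $\phi_\Id$ is the identity on $\EmptyEnv$, this produces exactly the desired conclusion at the root.

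The argument reduces to three auxiliary facts, which I would prove by induction on syntax: (a) $\phi_\Id$ commutes with the substitution $\EndpointTypeS\subst{\trec[\Id']\etvar.\EndpointTypeS}{\etvar}$ used to unfold recursive types, provided $\Id' \neq 0$; (b) obligations are invariant under $\phi_\Id$, because $\obligation[\WEnv]{\cdot}$ inspects only the outermost prefix, type variable, or $\End$, none of which $\phi_\Id$ alters; (c) $\phi_\Id$ preserves duality, i.e., $\EndpointType \dualr \EndpointType'$ implies $\phi_\Id(\EndpointType) \dualr \phi_\Id(\EndpointType')$. With these in hand, the main induction on $\pi_0$ is routine. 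Rules \refrule{t-idle}, \refrule{t-var}, \refrule{t-par}, and \refrule{t-end} transfer verbatim; \refrule{t-input} and \refrule{t-output} survive because their obligation side conditions are preserved by (b); \refrule{t-session} goes through by (c); and for \refrule{t-rec}, a recursion $\Rec[0]\pvar.\ProcessQ$ typed with $\trec[0]\tilde\etvar.\tilde\EndpointType$ and side condition $0\leq 0$ in $\pi_0$ lifts to the corresponding $\Rec[\Id]\pvar.\approximate\ProcessQ\Id$ of $\approximate\Process\Id$ typed with $\trec[\Id]\tilde\etvar.\phi_\Id(\tilde\EndpointType)$ and side condition $\Id\leq\Id$ in $\pi_\Id$.

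The hard part will be (c). I would prove it by induction on the derivation of $\EndpointType \dualr \EndpointType'$. The cases \refrule{d-end}, \refrule{d-var}, \refrule{d-prefix}, and \refrule{d-rec} follow at once because $\phi_\Id$ commutes with the relevant constructors and leaves priorities alone. The delicate case is \refrule{d-unfold}, which consumes a type of the form $\trec[\Id'+1]\etvar.\EndpointType$. Here the hypothesis that $\pi_0$ has bound $0$ is the linchpin: since $0$ is not of the form $\Id'+1$ for any index, \refrule{d-unfold} is never triggered on the $\trec[0]$-subterms that $\phi_\Id$ actually modifies, so there is no index mismatch to worry about. For the surviving recursion indices (natural numbers $n+1 \geq 1$, or $\infty$), $\phi_\Id$ leaves them unchanged and the same \refrule{d-unfold} step replays at the lifted level after pushing $\phi_\Id$ through the substitution by (a). This is exactly the asymmetry that the paper's counterexample with $\EndpointTypeT$ and $\EndpointTypeS$ foreshadowed.
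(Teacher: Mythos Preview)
Your approach is the paper's one-line argument fleshed out: uniformly lift the index $0$ to $\Id$ throughout the derivation. There is, however, a gap in your handling of \refrule{d-unfold} in claim~(c). You correctly note that \refrule{d-unfold} never \emph{consumes} a $\trec[0]$ in its premise (since $0 \neq \Id'{+}1$), but it can \emph{produce} one in its conclusion: from a premise $\EndpointTypeT \dualr \trec[1]\etvar.\EndpointTypeS$ the rule yields $\EndpointTypeT \dualr \EndpointTypeS\subst{\trec[0]\etvar.\EndpointTypeS}{\etvar}$, and your $\phi_\Id$ lifts that inner $\trec[0]$ to $\trec[\Id]$. Replaying \refrule{d-unfold} on the lifted premise $\phi_\Id(\EndpointTypeT) \dualr \trec[1]\etvar.\phi_\Id(\EndpointTypeS)$ instead gives the conclusion $\phi_\Id(\EndpointTypeT) \dualr \phi_\Id(\EndpointTypeS)\subst{\trec[0]\etvar.\phi_\Id(\EndpointTypeS)}{\etvar}$ with index $0$, not $\Id$. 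This is precisely the case your proviso on~(a) excludes, so ``push $\phi_\Id$ through the substitution by~(a)'' does not apply here.

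The intended repair, implicit in the paper's discussion preceding the proposition, is that $\pi_0$ may be taken to avoid \refrule{d-unfold} altogether. The only typing rule that eliminates a top-level $\trec$ from a name's type is \refrule{t-rec}, and under bound $0$ it forces that index to be $0$; hence the two types handed to each instance of \refrule{t-session} can be chosen structurally aligned, so that their duality is derivable using \refrule{d-rec} alone. Once you add that normalization step, the problematic \refrule{d-unfold} case disappears and your lifting argument goes through unchanged.
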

\ifproofs
\begin{proof}
  The derivation for $\wtp[\Id]{}{}{}{\approximate\Process\Id}$ can be
  obtained from that for $\wtp[\Id]{}{}{}{\Process}$ by replacing
  every index $\infty$ occurring in $\Rec[\infty]{}$'s and
  $\trec[\infty]$'s with $\Id$.
\end{proof}
\fi

Given any finite reduction of a user process, it is possible to find
an appropriate finite approximant that simulates the reduction.

\begin{proposition}
\label{prop:approx_finite}
Let $\Process$ be a user process and $\ProcessP \red^*
\ProcessP'$. Then $\approximate{\ProcessP}{n} \red^* \ProcessQ \asub
\ProcessP'$ for some $n$ and $\ProcessQ \in \FiniteProcessSet$.
\end{proposition}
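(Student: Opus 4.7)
The plan is to prove Proposition~\ref{prop:approx_finite} by induction on the length $k$ of the reduction $\ProcessP \red^* \ProcessP'$, supported by an auxiliary monotonicity (``lifting'') lemma for approximants. The base case $k=0$ is immediate: take $n = 0$ and $\ProcessQ = \approximate{\ProcessP}{0}$, which lies in $\FiniteProcessSet$ by construction and satisfies $\approximate{\ProcessP}{0} \asub \ProcessP = \ProcessP'$ since $0 \leq \infty$ at every $\Rec[\infty]$ of $\ProcessP$.

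For the inductive step I decompose $\ProcessP \red^* \ProcessP_1 \red \ProcessP'$ with the initial segment of length $k$ and apply the induction hypothesis, obtaining $n_1$ and $\ProcessQ_1 \in \FiniteProcessSet$ with $\approximate{\ProcessP}{n_1} \red^* \ProcessQ_1 \asub \ProcessP_1$. I then analyze the last step $\ProcessP_1 \red \ProcessP'$ (modulo structural congruence, which does not touch indices and commutes with $\asub$). A communication step via \refrule{r-comm} is immediately simulable in $\ProcessQ_1$ because $\asub$ preserves the structure of input/output prefixes, so $n = n_1$ suffices. An \refrule{r-rec} unfolding of a $\Rec[\Id+1]$ with $\Id+1$ finite is equally easy, since approximation leaves finite indices untouched and finite indices descending from the same source are decremented in lockstep in the two reductions. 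When the unfolded term is some $\Rec[\infty]\pvar.R$, the corresponding $\Rec[m]\pvar.R'$ in $\ProcessQ_1$ carries a finite $m$; if $m > 0$ the unfolding proceeds in $\ProcessQ_1$ and the pre-congruence property of $\asub$ carries it through the substitution.

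The delicate case is $m = 0$, where no further unfolding is possible in $\ProcessQ_1$. To handle it I would establish a monotonicity lemma stating that whenever $\approximate{\ProcessP}{n_1} \red^* \ProcessQ_1$, one also has $\approximate{\ProcessP}{n_1+1} \red^* \ProcessQ_1^+$ where $\ProcessQ_1^+$ is obtained from $\ProcessQ_1$ by incrementing by one the index of every recursion descending from a $\Rec[\infty]$ of $\ProcessP$. Applied here, this raises the offending index from $0$ to $1$ and unblocks the unfolding, so $n = n_1 + 1$ works. The main obstacle is precisely this lifting lemma: its proof proceeds by an inner induction on the length of $\approximate{\ProcessP}{n_1} \red^* \ProcessQ_1$, classifying each recursion encountered according to whether it descends from a $\Rec[\infty]$ of $\ProcessP$ or from an originally finite-indexed $\Rec$, and verifying that every reduction rule commutes with the uniform shift that adds $+1$ to the indices of the first class. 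This commutation is immediate for \refrule{r-comm}, \refrule{r-par}, \refrule{r-res} and \refrule{r-struct}, which ignore indices, and is straightforward for \refrule{r-rec} because an unfolding turning $m{+}1$ into $m$ mirrors one that turns $m{+}2$ into $m{+}1$ with the same shift.
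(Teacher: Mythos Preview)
Your argument is correct, but it is considerably more elaborate than what the paper does. The paper's proof is a single observation: let $n$ be the length of the derivation $\ProcessP \red^* \ProcessP'$ and simulate the whole reduction starting from $\approximate{\ProcessP}{n}$. The point is that each \refrule{r-rec} step decreases by one the index of the recursion being unfolded (and of the copies it creates), so along any ``descent chain'' of indices originating from a $\Rec[\infty]$ in $\ProcessP$ the index drops by at most one per reduction step; with $n$ steps in total, an index that starts at $n$ never reaches $0$ before its matching unfolding is needed. This global choice of $n$ eliminates the $m=0$ case outright, so no lifting lemma is required.

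Your incremental strategy---pick $n_1$ from the induction hypothesis, then bump it when stuck---works, but the lifting lemma you introduce is essentially a reformulation of the same global bound proved one step at a time, and it forces you to track which recursions in $\ProcessQ_1$ descend from a $\Rec[\infty]$ of $\ProcessP$ versus an originally finite $\Rec$. That bookkeeping is sound but unnecessary: choosing $n$ up front as the reduction length sidesteps it entirely and yields a two-line proof.
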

\ifproofs
\begin{proof}
  Just let $n$ be the number of reductions in the derivation of
  $\ProcessP \red^* \ProcessP'$. Then it is possible to simulate the
  reduction $\ProcessP \red^* \ProcessP'$ starting from
  $\approximate{\ProcessP}{n}$ to reach some $\ProcessQ \asub
  \ProcessP'$.
\end{proof}
\fi

%%% Local Variables: 
%%% mode: latex
%%% TeX-master: "main"
%%% End: 

\ifproofs
\paragraph{Strong Normalization of Finite Approximants.}
Let us address the strong normalization property of the
$\FiniteProcessSet$ fragment of the calculus. While this result is
intuitively obvious because each recursion can be unfolded only
finitely many times, the formal proof requires a rather complex
``measure'' for processes that decreases at each reduction step.
As a first attempt, one might define the measure of a process
$\Process$ as the vector where the item at index $i$ is the number of
$\Rec[i]{}$ terms occurring in $\Process$. This measure does not take
into account the fact that recursions with the highest index may
increase in number, if they occur nested within other recursions. For
instance, we have:
\[
\Rec[3]\pvarX.(\Rec[6]\pvarY.\pvarY \parop \pvarX)
\red
\Rec[6]\pvarY.\pvarY \parop \Rec[2]\pvarX.(\Rec[6]\pvarY.\pvarY \parop \pvarX)
\]

The example shows that the potential multiplicity of a recursive term
should also depend on the indices of the recursions within which it is
nested. Above, since the $\Rec[6]{}$ term occurs with a $\Rec[3]{}$
one, 3 unguarded instances of the $\Rec[6]{}$ term can be generated
overall. But this is true in the example above only because the
outermost recursion binds exactly one occurrence of the $\pvarX$
variable. In general, recursion variables can occur non-linearly. For
instance, we have
\[
\Rec[3]\pvarX.(\ProcessP \parop \pvarX \parop \pvarX)
\red
\ProcessP
\parop
\Rec[2]\pvarX.(\ProcessP \parop \pvarX \parop \pvarX)
\parop
\Rec[2]\pvarX.(\ProcessP \parop \pvarX \parop \pvarX)
\]
where the eventual number of unguarded $\ProcessP$ terms is 5.
In essence, in computing the multiplicity of a term we must consider
not only the indices of the recursions within which it is nested, but
also the multiplicity of the process variables bound by such
recursions.

\newcommand{\wvars}[2]{\mathbf{V}_{#2}(#1)}
\newcommand{\wproc}{\mathbf{E}}

Formally, we define an auxiliary function $\wvars{-}{-}$ such that
$\wvars{\ProcessP}{\pvar}$ provides the \emph{measure} of $\pvar$ in
$\ProcessP$, namely the number of occurrences of $X$ in $\ProcessP$,
taking into account duplications caused by inner recursions:
\[
\begin{array}{r@{~}c@{~}l@{\qquad}l}
  \wvars{\pvar}{\pvar} & = & 1 \\
  \wvars{\pvarY}{\pvarX} & = & 0 & \text{if $\pvarX \ne \pvarY$} \\
  \wvars{\send\NameU\NameV.\ProcessP}{X} =
  \wvars{\receive\NameU\var.\ProcessP}{X} & = & \wvars{\ProcessP}{X} \\
  \wvars{\ProcessP \parop \ProcessQ}{X} & = &
  \wvars{\ProcessP}{X} + \wvars{\ProcessQ}{X} \\
  \wvars{\Rec[n]\pvarX.\Process}{\pvarX} & = & 0 \\
  \wvars{\Rec[n]\pvarY.\Process}{\pvarX} & = &
  \wvars{\Process}{X} \cdot \sum_{k=0}^{n-1} \wvars{\Process}{Y}^k
  & \text{if $\pvarX \ne \pvarY$}
\end{array}
\]

All equations but the last one are unremarkable. In order to compute
the measure of $\pvarX$ in a process $\Rec[n]\pvarY.\Process$, we
multiply the measure of $\pvarX$ in $\Process$ by the amount of
duplication that $\pvarX$ is subjected to in all the unfoldings of
$\Rec[n]\pvarY.\Process$. This is determined by the geometric
progression $\sum_{k=0}^{n-1} \wvars{\Process}{Y}^k$ which, by
convention, is 0 when $n = 0$. In particular, variables guarded by a
$\Rec[0]{}$ term do not count, which is consistent with the fact that
such terms do not reduce (see \refrule{r-rec} in
Table~\ref{tab:reduction}).

Once we know how to determine the measure of variables, the measure of
terms follows similarly:
\[
\begin{array}{r@{~}c@{~}l}
  \wproc(X) & = & 0 \\
  \wproc(\send\NameU\NameV.\ProcessP) =
  \wproc(\receive\NameU\var.\ProcessP) & = & 1 + \wproc(\ProcessP) \\
  \wproc(\ProcessP \parop \ProcessQ) & = & \wproc(\ProcessP) +
  \wproc(\ProcessQ) \\
  \wproc(\Rec[n]\pvarX.\Process) & = & (1 + \wproc(\Process)) \cdot
  \sum_{k=0}^{n-1} \wvars{\Process}{X}^k
\end{array}
\]

In computing $\wproc(\Process)$ we also take into account the prefixes
of $\Process$, which may cause reductions by means of
\refrule{r-comm}. The measure of a recursive term
$\Rec[n]\pvarX.\Process$ is 1 (given by the unfolding of the term)
plus the measure of $\Process$ (after the unfolding) multiplied by the
amount of duplication that $\pvar$ is subjected to in the body of the
recursion.
As before, summations are empty when $n = 0$. In particular
$\wproc(\Rec[0]\pvar.\Process) = 0$ for every $\pvar$ and $\Process$.

The following crucial lemma shows that our notion of measure is well
behaved with respect to process substitutions:

\begin{lemma}
\label{lem:weight_subst}
Let $\ProcessP, \ProcessQ \in \FiniteProcessSet$ and
$\ProcessP\subst\ProcessQ{X}$ be defined.  Then
$\wproc(\ProcessP\subst\ProcessQ{X}) = \wproc(\ProcessP) +
\wproc(\ProcessQ) \cdot \wvars{\ProcessP}{X}$.
\end{lemma}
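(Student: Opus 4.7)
The plan is to proceed by structural induction on $\ProcessP$, supported by an auxiliary lemma about how $\wvars{-}{Y}$ interacts with substitution. Specifically, assuming (by the usual alpha-renaming convention) that bound variables in $\ProcessP$ are fresh for $\ProcessQ$, I will first show that whenever $Y \ne X$ we have $\wvars{\ProcessP\subst{\ProcessQ}{X}}{Y} = \wvars{\ProcessP}{Y}$. This is a straightforward parallel induction on $\ProcessP$: the variable, prefix, and parallel cases are immediate from the defining equations, and the recursion case $\Rec[n]\pvarZ.\ProcessP'$ (with $Z$ fresh for $\ProcessQ$) goes through because the geometric sum $\sum_{k=0}^{n-1}\wvars{\ProcessP'}{Z}^k$ is preserved under the substitution by the inductive hypothesis applied at both exponent and factor.

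With the auxiliary lemma in hand I would attack the main statement. The base cases are direct: for $\ProcessP = X$ both sides equal $\wproc(\ProcessQ)$, and for $\ProcessP = Y$ with $Y \ne X$ both sides are $0$. For $\ProcessP = \send\NameU\NameV.\ProcessP'$ and $\ProcessP = \receive\NameU\var.\ProcessP'$, the shared summand $1$ matches on either side and the claim follows from the inductive hypothesis on $\ProcessP'$. The parallel case $\ProcessP = \ProcessP_1 \parop \ProcessP_2$ is additive in $\wproc$ and in $\wvars{-}{X}$, so the two applications of the inductive hypothesis combine cleanly.

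The heart of the argument is the recursion case $\ProcessP = \Rec[n]\pvarY.\ProcessP'$. If $Y = X$ then $X$ is bound in $\ProcessP$, the substitution is a no-op, $\wvars{\ProcessP}{X} = 0$, and both sides collapse to $\wproc(\ProcessP)$. Otherwise, by alpha-renaming I assume $Y \ne X$ and $Y \notin \fpv(\ProcessQ)$. Unfolding the definition of $\wproc$ on the recursion, applying the main inductive hypothesis to $\ProcessP'$ and the auxiliary lemma to rewrite $\wvars{\ProcessP'\subst{\ProcessQ}{X}}{Y}$ as $\wvars{\ProcessP'}{Y}$, the left-hand side becomes
\[
  \bigl(1 + \wproc(\ProcessP') + \wproc(\ProcessQ)\cdot\wvars{\ProcessP'}{X}\bigr)
  \cdot \sum_{k=0}^{n-1}\wvars{\ProcessP'}{Y}^k.
\]
Distributing the geometric sum over the parenthesised addition and recognising the two resulting terms as $\wproc(\Rec[n]\pvarY.\ProcessP')$ and $\wproc(\ProcessQ)\cdot\wvars{\Rec[n]\pvarY.\ProcessP'}{X}$, respectively, gives the desired equality. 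The main obstacle, and the reason to isolate the auxiliary lemma, is ensuring that the factor $\sum_{k}\wvars{\ProcessP'}{Y}^k$ appearing inside $\wproc(\Rec[n]\pvarY.\ProcessP')$ and in $\wvars{\Rec[n]\pvarY.\ProcessP'}{X}$ survives the substitution unchanged; once that is granted, the recursion case is just careful arithmetic.
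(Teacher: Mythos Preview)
Your proposal is correct and follows essentially the same route as the paper: a structural induction on $\ProcessP$, with the recursion case hinging on the fact that $\wvars{\ProcessP'\subst{\ProcessQ}{X}}{Y} = \wvars{\ProcessP'}{Y}$, which the paper invokes inline (justified by ``because $Y \not\in \fpv(\ProcessQ)$'') rather than isolating as a separate lemma. One small caveat: your auxiliary lemma as stated (``whenever $Y \ne X$'') is false without the additional hypothesis $Y \not\in \fpv(\ProcessQ)$ --- take $\ProcessP = X$, $\ProcessQ = Y$ --- but since you only ever apply it with $Y$ a bound variable chosen fresh for $\ProcessQ$, the argument goes through once you add that hypothesis to the lemma's statement.
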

\ifproofs
\begin{proof}
\newcommand{\mysubst}{\subst\ProcessQ{X}}
By induction on the structure of $\ProcessP$. We only prove a few
interesting cases.

\proofcase{$\ProcessP = \send\NameU\NameV.\ProcessP'$}
  We have:

\begin{tabular}{r@{~}c@{~}p{0.475\textwidth}@{\quad}l}
  $\wproc(\ProcessP\mysubst)$ & $=$ &
  $\wproc(\send\NameU\NameV.\ProcessP'\mysubst)$ & definition of
  substitution \\
  & $=$ & $1 + \wproc(\ProcessP'\mysubst)$ & definition of $\wproc(-)$
  \\
  & $=$ & $1 + \wproc(\ProcessP') + \wproc(\ProcessQ) \cdot
  \wvars{\ProcessP'}{X}$ & induction hypothesis \\
  & $=$ & $\wproc(\ProcessP) + \wproc(\ProcessQ) \cdot
  \wvars{\ProcessP}{X}$ & definition of $\wproc(-)$ and $\wvars{-}{-}$ \\

\end{tabular}

\proofcase{$\ProcessP = \ProcessP_1 \parop \ProcessP_2$}
  We have:

\begin{tabular}{r@{~}c@{~}p{0.475\textwidth}@{\quad}l}
  $\wproc(\ProcessP\mysubst)$ & $=$ &
  $\wproc(\ProcessP_1\mysubst) + \wproc(\ProcessP_2\mysubst)$ &
  definition of $\wproc(-)$ \\
  & $=$ & $\wproc(\ProcessP_1) + \wproc(\ProcessQ) \cdot
  \wvars{\ProcessP_1}{X} + \wproc(\ProcessP_2) + \wproc(\ProcessQ)
  \cdot
  \wvars{\ProcessP_2}{X}$ & induction hypothesis \\
  & $=$ & $\wproc(\ProcessP) + \wproc(\ProcessQ) \cdot
  \wvars{\ProcessP}{X}$
  & definition of $\wproc(-)$ and $\wvars{-}{-}$
\end{tabular}

\proofcase{$\ProcessP = \Rec[n]\pvarY.\ProcessP'$ when $\pvarX \ne \pvarY$}
  We have:

\begin{tabular}{r@{~}c@{~}p{0.475\textwidth}@{\quad}l}
  $\wproc(\ProcessP\mysubst)$ & $=$ &
  $\wproc(\Rec[n]\pvarY.\ProcessP'\mysubst)$ & definition of
  substitution
  \\
  & $=$ & $(1 + \wproc(\ProcessP'\mysubst)) \cdot \sum_{k=0}^{n-1}
  \wvars{\ProcessP'\mysubst}{Y}^k$ & definition of $\wproc(-)$ \\
  & $=$ & $(1 + \wproc(\ProcessP'\mysubst)) \cdot \sum_{k=0}^{n-1}
  \wvars{\ProcessP'}{Y}^k$ & because $Y \not\in \fpv(\ProcessQ)$ \\
  & $=$ & $(1 + \wproc(\ProcessP') + \wproc(\ProcessQ) \cdot
  \wvars{\ProcessP'}{X}) \cdot \sum_{k=0}^{n-1} \wvars{\ProcessP'}{Y}^k$ &
  induction hypothesis \\
  & $=$ & $(1 + \wproc(\ProcessP')) \cdot \sum_{k=0}^{n-1}
  \wvars{\ProcessP'}{Y}^k$
  \\
  & & 
  \qquad ${} + \wproc(\ProcessQ) \cdot
  \wvars{\ProcessP'}{X} \cdot \sum_{k=0}^{n-1} \wvars{\ProcessP'}{Y}^k$
  & distributivity \\
  & $=$ & $\wproc(\ProcessP) + \wproc(\ProcessQ) \cdot
  \wvars{\ProcessP}{X}$ & definition of $\wproc(-)$ and $\wvars{-}{-}$
\end{tabular}
\end{proof}
\fi

The main result of this section states that the measure of a process
decreases at each reduction step.

\begin{theorem}
  Let $\ProcessP \in \FiniteProcessSet$ and $\ProcessP \red
  \ProcessQ$. Then $\wproc(\ProcessQ) < \wproc(\ProcessP)$.
\end{theorem}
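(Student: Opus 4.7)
The plan is to proceed by induction on the derivation of $\ProcessP \red \ProcessQ$, with a case analysis on the last reduction rule applied. The compositional definitions of $\wproc(-)$ and $\wvars{-}{-}$ make the inductive cases \refrule{r-par} and \refrule{r-res} routine: if a subterm's measure strictly decreases while the surrounding context is unchanged, then the whole-term measure strictly decreases as well. The \refrule{r-struct} case requires a preliminary check that $\ProcessP \equiv \ProcessP'$ implies $\wproc(\ProcessP) = \wproc(\ProcessP')$, which is an easy induction on the congruence rules in Table~\ref{tab:congruence}: associativity and commutativity of $\parop$ and commutation of restrictions are immediate, and $\wproc(\idle \parop \ProcessP) = 0 + \wproc(\ProcessP)$ handles the remaining axiom. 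For \refrule{r-comm}, the two guarding prefixes contribute $2$ to the measure of the redex, whereas the value substitution $\ProcessQ'\subst{\ep\ChannelC\PolarityQ}{\var}$ substitutes an endpoint (not a process) and therefore leaves $\wproc$ unchanged, so $\wproc$ drops by exactly $2$.

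The interesting case is \refrule{r-rec}, where $\Rec[\Id+1]\pvar.\Process \red \Process\subst{\Rec[\Id]\pvar.\Process}{\pvar}$ and both indices are finite because $\ProcessP \in \FiniteProcessSet$. Let $m = \wproc(\Process)$, $v = \wvars{\Process}{\pvar}$, and $\ProcessR = \Rec[\Id]\pvar.\Process$, so that $\wproc(\ProcessR) = (1 + m) \cdot \sum_{k=0}^{\Id-1} v^k$. Applying Lemma~\ref{lem:weight_subst} gives
\[
\wproc(\Process\subst{\ProcessR}{\pvar}) = m + \wproc(\ProcessR) \cdot v = m + (1+m) \sum_{k=1}^{\Id} v^k,
\]
while on the left-hand side $\wproc(\Rec[\Id+1]\pvar.\Process) = (1+m)\sum_{k=0}^{\Id} v^k$. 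Subtracting, the telescoping of the two geometric progressions leaves $(1+m) - m = 1$, so $\wproc$ drops by exactly $1$. The boundary case $\Id = 0$ works with the convention that the empty sum is zero: the right-hand side reduces to $m$ and the left-hand side to $1+m$.

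The main obstacle is precisely this algebraic identity for \refrule{r-rec}: everything hinges on Lemma~\ref{lem:weight_subst} being tight enough that the geometric-progression bookkeeping aligns exactly, and on the fact that $\wvars{-}{-}$ correctly predicts the number of copies of $\pvar$ (and hence of $\ProcessR$) produced by the unfolding. Once that identity is in hand, the rest of the proof is straightforward compositional reasoning, and the theorem is an immediate consequence.
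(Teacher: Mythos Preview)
Your proposal is correct and follows essentially the same approach as the paper: induction on the reduction derivation, with the two base cases \refrule{r-comm} and \refrule{r-rec} handled exactly as in the paper (the latter via Lemma~\ref{lem:weight_subst} and the same geometric-progression telescoping yielding a drop of exactly $1$). Your treatment is in fact slightly more explicit than the paper's about the inductive cases and about invariance of $\wproc$ under $\equiv$, which the paper merely asserts.
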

\ifproofs
\begin{proof}
  By induction on the derivation of $\ProcessP \red \ProcessQ$ and by
  cases on the last rule applied. Here we only show the two base
  cases, the others following by the inductive argument possibly using
  the fact that $\equiv$ and endpoint substitutions preserve the
  measure of processes and process variables.

\rproofcase{r-comm}
Then $\ProcessP =
\send{\ep\ChannelA\PolarityP}{\ep\ChannelC\PolarityQ}.\ProcessP' \parop
\receive{\ep\ChannelA{\co\PolarityP}}{\var}.\ProcessQ' \red
\ProcessP' \parop \ProcessQ'\subst{\ep\ChannelC\PolarityQ}{\var} =
\ProcessQ$.
We conclude:

\begin{tabular}{r@{~}c@{~}l@{\qquad}l}
$\wproc(\ProcessQ)$ & $=$ $\wproc(\ProcessP') +
\wproc(\ProcessQ'\subst{\ep\ChannelC\PolarityQ}{\var}) =
\wproc(\ProcessP) - 2$
\end{tabular}

\noindent
using the fact that endpoint substitutions do not alter the measure of
a process.

\rproofcase{r-rec}
Then $\ProcessP = \Rec[n+1]\pvar.\ProcessP' \red
\ProcessP'\subst{\Rec[n]\pvar.\ProcessP'}{\pvar} = \ProcessQ$.
  We derive:

\begin{tabular}{r@{~}c@{~}l@{\qquad}l}
  $\wproc(\ProcessQ)$ & $=$ &
  $\wproc(\ProcessP') + \wproc(\Rec[n]\pvarX.\ProcessP') \cdot
  \wvars{\ProcessP'}{X}$ & Lemma~\ref{lem:weight_subst} \\
  & $=$ & $\wproc(\ProcessP') + (1 + \wproc(\ProcessP')) \cdot
  \sum_{k=0}^{n-1} \wvars{\ProcessP'}{X}^k \cdot \wvars{\ProcessP'}{X}$
  & definition of $\wproc(-)$ \\
  & $=$ & $\wproc(\ProcessP') + (1 + \wproc(\ProcessP')) \cdot
  (\sum_{k=0}^{n} \wvars{\ProcessP'}{X}^k - 1)$
  & geometric progression \\
  & $=$ & $\wproc(\ProcessP') + (1 + \wproc(\ProcessP')) \cdot
  \sum_{k=0}^{n} \wvars{\ProcessP'}{X}^k - (1 + \wproc(\ProcessP'))$
  & distributivity \\
  & $=$ & $(1 + \wproc(\ProcessP')) \cdot
  \sum_{k=0}^{n} \wvars{\ProcessP'}{X}^k - 1$
  &  \\
  & $=$ & $\wproc(\ProcessP) - 1$ & definition of $\wproc(-)$
\end{tabular}

\end{proof}
\fi

\begin{corollary}
\label{cor:sn}
Let $\Process \in \FiniteProcessSet$. Then $\Process$ is strongly
normalizing.
\end{corollary}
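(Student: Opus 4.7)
The plan is to derive the corollary almost immediately from the preceding theorem, using $\wproc$ as a well-founded termination measure. The theorem guarantees that $\ProcessP \red \ProcessQ$ implies $\wproc(\ProcessQ) < \wproc(\ProcessP)$, so the only thing left to establish is that $\wproc$ actually takes values in a well-founded set (namely $\natset$) on every process in $\FiniteProcessSet$. Once that is in place, an infinite reduction sequence $\ProcessP_0 \red \ProcessP_1 \red \cdots$ would give an infinite strictly decreasing chain $\wproc(\ProcessP_0) > \wproc(\ProcessP_1) > \cdots$ of natural numbers, contradicting well-foundedness of $<$ on $\natset$.

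To verify that $\wproc(\ProcessP) \in \natset$ whenever $\ProcessP \in \FiniteProcessSet$, I would proceed by a straightforward structural induction on $\ProcessP$. The only clauses in the definitions of $\wproc(-)$ and $\wvars{-}{-}$ that could in principle produce a non-finite value are those for recursion, since they involve the geometric sum $\sum_{k=0}^{n-1} \wvars{\ProcessP'}{Y}^k$. But by definition of $\FiniteProcessSet$, every $\Rec[n]{}$ occurring in $\ProcessP$ carries a finite index $n \in \natset$, so these summations have finitely many terms and remain natural numbers. Composition, prefixes, and variables contribute only addition, multiplication, and constants, all closed on $\natset$.

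With $\wproc(\ProcessP) \in \natset$ established, the termination argument is immediate: assume for contradiction that $\ProcessP \in \FiniteProcessSet$ admits an infinite reduction $\ProcessP = \ProcessP_0 \red \ProcessP_1 \red \cdots$. A minor auxiliary observation needed here is that $\FiniteProcessSet$ is closed under reduction, so that the theorem applies at every step; this is easy to check by inspecting the reduction rules in Table~\ref{tab:reduction}, noting in particular that \refrule{r-rec} strictly decreases the index of the unfolded recursion (from $\Id+1$ to $\Id$) and leaves all other indices intact, while \refrule{r-comm} only discards prefixes. Applying the theorem repeatedly would then yield $\wproc(\ProcessP_0) > \wproc(\ProcessP_1) > \cdots$ in $\natset$, a contradiction.

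I do not expect any real obstacle: the content has already been absorbed into the preceding theorem, whose proof is the genuinely delicate part. The corollary is essentially a packaging step that combines the decreasing-measure theorem with closure of $\FiniteProcessSet$ under reduction and the well-foundedness of $\natset$.
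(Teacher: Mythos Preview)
Your proposal is correct and matches the paper's intent: the corollary is stated without proof as an immediate consequence of the preceding theorem, and your argument spells out exactly the standard termination reasoning (well-foundedness of $\natset$, closure of $\FiniteProcessSet$ under reduction) that the paper leaves implicit.
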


\else

The fact that any finite approximation is strongly normalizing
requires the definition of a rather complex measure for processes that
decreases at each reduction step. Because of space limitations we have
to omit the details (to be found in the full version of the paper) and
we just state the strong normalization result.

\begin{theorem}
\label{thm:sn}
Let $\Process \in \FiniteProcessSet$. Then $\Process$ is strongly
normalizing.
\end{theorem}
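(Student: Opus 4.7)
The plan is to prove strong normalization by exhibiting a well-founded measure $\wproc : \FiniteProcessSet \to \natset$ that strictly decreases along every reduction step. Since reductions only occur via \refrule{r-comm}, \refrule{r-rec}, the two context rules, and \refrule{r-struct}, it suffices to ensure the measure drops at the two axiom rules and is preserved by structural congruence and endpoint substitution.

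The naive idea of counting recursion terms weighted by their indices fails because a recursion $\Rec[n]\pvarX.\ProcessP$ with multiple free occurrences of $\pvarX$ inside $\ProcessP$ can duplicate the inner subterms at each unfolding. I would therefore first define an auxiliary function $\wvars{\ProcessP}{X}$ measuring the number of occurrences of $X$ available in $\ProcessP$ \emph{after} all potential unfoldings. The only nontrivial clause is for a recursion $\Rec[n]\pvarY.\Process$ with $X \neq Y$: since each of the $n$ successive unfoldings multiplies the available copies of $X$ by $\wvars{\Process}{Y}$, the overall count is $\wvars{\Process}{X} \cdot \sum_{k=0}^{n-1} \wvars{\Process}{Y}^k$. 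For the process measure $\wproc$, prefixes contribute a unit (reflecting the reductions they enable through \refrule{r-comm}), parallel composition is additive, and recursion is handled analogously by $\wproc(\Rec[n]\pvarX.\Process) = (1 + \wproc(\Process)) \cdot \sum_{k=0}^{n-1} \wvars{\Process}{X}^k$, where the empty summation when $n=0$ correctly assigns measure $0$ to blocked recursions (consistent with \refrule{r-rec} requiring a positive index).

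The technical heart of the proof is a substitution lemma asserting that $\wproc(\ProcessP\subst\ProcessQ{X}) = \wproc(\ProcessP) + \wproc(\ProcessQ) \cdot \wvars{\ProcessP}{X}$, proved by induction on $\ProcessP$; the interesting case is a recursion $\Rec[n]\pvarY.\ProcessP'$ with $Y \neq X$, where distributivity of the geometric progression over the inductive equation yields the claim. Armed with this lemma, one checks the two base reductions. For \refrule{r-comm}, the two prefixes contribute $2$ to the measure of the redex and the continuations are unchanged (endpoint substitution leaves both $\wproc$ and $\wvars{-}{X}$ invariant, since only process-variable occurrences matter), so the measure drops exactly by $2$. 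For \refrule{r-rec}, applying the substitution lemma to $\ProcessP'\subst{\Rec[n]\pvarX.\ProcessP'}{\pvarX}$ and then using the geometric identity $\sum_{k=0}^{n-1} r^{k+1} = \sum_{k=0}^{n} r^k - 1$ shows the measure drops by exactly $1$.

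The main obstacle is designing the measure so that the substitution identity is exact (not just an inequality), because the \refrule{r-rec} case requires manipulating geometric sums that interact with the duplication factor introduced by nested recursions; any looser definition either fails to decrease on \refrule{r-rec} or fails to be preserved by the context rules. Once the substitution lemma is established, strong normalization is immediate: every $\Process \in \FiniteProcessSet$ has $\wproc(\Process) \in \natset$, and each reduction step strictly decreases it, so no infinite reduction sequence exists.
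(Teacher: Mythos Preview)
Your proposal is correct and matches the paper's own proof essentially line for line: the same auxiliary function $\wvars{\ProcessP}{X}$ with the geometric-sum clause for nested recursions, the same process measure $\wproc$, the same exact substitution lemma, and the same analysis showing the measure drops by $2$ on \refrule{r-comm} and by $1$ on \refrule{r-rec}. The only point you leave implicit is the invariance of $\wproc$ under structural congruence and endpoint substitution, which the paper also treats as routine.
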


\fi

%%% Local Variables: 
%%% mode: latex
%%% TeX-master: "main"
%%% End: 

\paragraph{Soundness Results.}
The last auxiliary result we need concerns the shape of well-typed
processes in normal form, which are proved to have no pending prefixes
at the top level.

\begin{lemma}
\label{lem:stability}
Let $\wtp[\Id]{}{}{}{\Process}$ and $\Process \nred$. Then $\Process
\equiv \new{\tilde\Channel} \prod_{i\in I}
\Rec[0]{\pvar_i}.\Process_i$.
\end{lemma}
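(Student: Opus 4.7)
The plan is to combine a structural-congruence normalization of $\Process$ with a priority-based argument that rules out stranded prefix threads. First I would use the laws in Table~\ref{tab:congruence} to pull every restriction outwards and flatten parallel compositions, obtaining $\Process \equiv \new{\tilde\Channel}\prod_{i \in I} \ProcessQ_i$ where each $\ProcessQ_i$ is a \emph{thread}, i.e.\ neither a restriction nor a parallel composition. Each such thread is therefore $\idle$, a process variable, a prefix, or a recursion $\Rec[\Id']\pvar.\Process'$.

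Next, I would eliminate the easy cases. Idle threads are absorbed via~\refrule{s-par 1}. Process variables cannot occur at the top level: the empty process environment in the hypothesis is propagated unchanged by~\refrule{t-par}, \refrule{t-input}, \refrule{t-output}, \refrule{t-session}, and~\refrule{t-end}, so~\refrule{t-var} is never applicable. Recursions with index $\Id'\geq 1$ would fire~\refrule{r-rec}, contradicting $\Process\nred$. This leaves threads of the allowed form $\Rec[0]{\pvar_i}.\Process_i$ together with prefix threads, and it remains to show that no prefix thread can survive.

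For this last step, suppose thread $i_0$ is a prefix with subject $\ep{\Channel_{j_0}}{\Polarity_0}$ and capability $\timestampN_0$. Since the top-level name environment is empty, $\Channel_{j_0}$ must be bound by some restriction in $\tilde\Channel$, so by~\refrule{t-session} the peer $\ep{\Channel_{j_0}}{\co{\Polarity_0}}$ also appears; by~\refrule{d-prefix} it carries a non-recursive input/output type with obligation $\timestampN_0$, in particular distinct from $\End$, so it is neither discarded by~\refrule{t-end} nor eligible to sit in a $\Rec[0]$-thread (whose environment, by~\refrule{t-rec}, consists only of bindings to recursive types). Hence the peer resides in another prefix thread $i_1$. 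If it were the subject of $i_1$, the two threads would synchronize via~\refrule{r-comm}, contradicting normal form; therefore the peer lies in the ``side'' environment of $i_1$, and the premise of~\refrule{t-input}/\refrule{t-output} forces the capability $\timestampN_1$ of $i_1$'s own subject to satisfy $\timestampN_1 < \timestampN_0$. Iterating this construction produces an infinite, strictly decreasing sequence $\timestampN_0 > \timestampN_1 > \cdots$ of natural numbers, which is impossible.

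The main obstacle is this last step: one must carefully check that the peer endpoint is genuinely present in some other thread's environment (ruling out the ability of~\refrule{t-end} to hide it, which rests on the fact that dual input/output types are never $\End$) and that the thread hosting the peer cannot be a $\Rec[0]$-thread (which rests on the recursive shape of environments imposed by~\refrule{t-rec}). Once these observations are in place, the strictly decreasing chain of capabilities yields the contradiction and forces $I$ to contain only recursions of index $0$, as required.
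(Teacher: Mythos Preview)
Your proposal is correct and follows essentially the same approach as the paper: normalize via structural congruence into restrictions over a parallel composition of threads, eliminate the easy cases, and then derive a contradiction from any surviving prefix thread by chasing the peer endpoint and obtaining a strictly decreasing chain of capabilities. The paper phrases the final step as ``for every prefix $i$ there exists a prefix $j$ with $\capability(\EndpointType_j) < \capability(\EndpointType_i)$'' rather than as an explicit infinite descent, but the argument is the same; your explicit treatment of process variables and of why \refrule{t-end} and \refrule{t-rec} cannot absorb the peer is a bit more detailed than the paper's version.
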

\ifproofs
\begin{proof}
  Using the structural congruence rules of Table~\ref{tab:congruence}
  it is clear that, whenever $\Process \nred$, we have $\Process
  \equiv \new{\tilde\Channel}\Process'$ for some $\Process'$ such that
\[
  \Process'
  ~~=~~
  \prod_{k\in K} \Rec[0]{\pvar_k}.\ProcessP_k
  ~~\parop~~
  \prod_{i=1}^{m} \send{\ep{\Channel_i}{\Polarity_i}}{\NameU_i}.\ProcessQ_i
  ~~\parop~~
  \prod_{\mathclap{i=m+1}}^{n} \receive{\ep{\Channel_i}{\Polarity_i}}{\var_i}.\ProcessR_i
\]
We now prove that $n = 0$.
From the hypothesis $\wtp[\Id]{}{}{}{\Process}$ we deduce
$\wtp[\Id]{}{}{\LEnv}{\Process'}$ for some $\LEnv$ that is balanced.
Let $\LEnv(\ep{\Channel_i}{\Polarity_i}) = \EndpointType_i$ and
$\LEnv(\ep{\Channel_i}{\co\Polarity_i}) = \co{\EndpointType_i}$ for
every $1 \le i \le n$.
Let $\capability(\EndpointType)$ be the capability of the topmost
action in $\EndpointType$, defined similarly to
$\obligation{\EndpointType}$, and observe that $\EndpointTypeT
\dualr \EndpointTypeS$ implies $\capability(\EndpointTypeT) =
\obligation{\EndpointTypeS}$.
We now proceed to show that for every $1\le i\le n$ there exists $1
\le j \le n$ such that $\capability(\EndpointType_j) <
\capability(\EndpointType_i)$. This is enough to conclude $n = 0$
because each $\capability(\EndpointType_i)$ is finite.

Let $1 \le i \le n$. By \refrule{t-input} and \refrule{t-output} we
deduce that $\EndpointType_i$ must begin with either an input or an
output, so $\ep{\Channel_i}{\co\Polarity_i}$ cannot occur in any of
the $\ProcessP_k$ because the type of endpoints occurring in
$\Process_k$ must begin with a $\trec[0]$ by \refrule{t-rec}.
Also, if $1 \le i \le m$, then $\ep{\Channel_i}{\co\Polarity_i}$
cannot be any of the $\ep{\Channel_j}{\Polarity_j}$ for $m + 1 \le j
\le n$ and if $m+1 \le i \le n$, then
$\ep{\Channel_i}{\co\Polarity_i}$ cannot be any of the
$\ep{\Channel_j}{\Polarity_j}$ for $1 \le j \le m$ because $\Process'
\nred$.
Suppose that $\ep{\Channel_i}{\co\Polarity_i} \in \set{\Name_j} \cup
\fn(\ProcessQ_j)$ for some $1 \le j \le m$.
By \refrule{t-output} we deduce $\capability(\EndpointType_j) <
\obligation{\co{\EndpointType_i}} = \capability(\EndpointType_i)$.
Suppose that $\ep{\Channel_i}{\co\Polarity_i} \in \fn(\ProcessR_j)$
for some $m + 1 \le j \le n$.
By \refrule{t-input} we deduce $\capability(\EndpointType_j) <
\obligation{\co{\EndpointType_i}} = \capability(\EndpointType_i)$.
\end{proof}
\fi

We conclude with the main result.

\begin{theorem}
\label{thm:progress}
Every user process $\Process$ such that
$\wtp[0]{}{}{}\approximate\Process{0}$ has progress.
\end{theorem}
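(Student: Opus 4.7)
The plan follows the roadmap laid out above; I will verify the first clause of Definition~\ref{def:progress}, since the second is symmetric. Assume $\Process \red^* \Process^*$ where $\Process^* \equiv \new{\tilde\Channel}(\send{\ep\ChannelA\Polarity}{\ep\ChannelC\PolarityQ}.\Process^*_1 \parop \ProcessQ^*)$; the goal is to exhibit a reduction $\ProcessQ^* \red^* \new{\tilde\ChannelB}(\receive{\ep\ChannelA{\co\Polarity}}{\var}.\ProcessQ^{*\prime} \parop \ProcessR^*)$ with $\ChannelA \notin \tilde\ChannelB$. The hypothesis $\wtp[0]{}{}{}{\approximate\Process{0}}$ together with Proposition~\ref{prop:approx_wt} secures $\wtp[n]{}{}{\EmptyEnv}{\approximate\Process{n}}$ for every $n \in \natset$, which will be the entry point for the finite-approximation argument.

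The next step is to invoke Proposition~\ref{prop:approx_finite} on $\Process \red^* \Process^*$ to obtain $n$ and $\ProcessP^*_{\mathsf{fin}} \in \FiniteProcessSet$ with $\approximate\Process{n} \red^* \ProcessP^*_{\mathsf{fin}} \asub \Process^*$. Iterated subject reduction (Theorem~\ref{thm:sr}), starting from the empty and hence balanced environment, propagates well-typedness at index $n$ to $\ProcessP^*_{\mathsf{fin}}$; then Theorem~\ref{thm:sn} delivers a strongly normalizing reduction $\ProcessP^*_{\mathsf{fin}} \red^* \ProcessP^*_{\mathsf{nf}}$ to some $\ProcessP^*_{\mathsf{nf}} \nred$, which is still well-typed. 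Lemma~\ref{lem:stability} then gives $\ProcessP^*_{\mathsf{nf}} \equiv \new{\tilde\Channel'}\prod_{i\in I}\Rec[0]{\pvar_i}.\Process_i$, so $\ProcessP^*_{\mathsf{nf}}$ displays no top-level input or output prefix. Applying Proposition~\ref{prop:approx_red} to $\ProcessP^*_{\mathsf{fin}} \red^* \ProcessP^*_{\mathsf{nf}}$ together with $\ProcessP^*_{\mathsf{fin}} \asub \Process^*$ yields $\Process^* \red^* \Process^{**}$ with $\ProcessP^*_{\mathsf{nf}} \asub \Process^{**}$. Because $\asub$ is generated purely by relaxing indices on $\Rec{}$ terms and is a pre-congruence, it cannot introduce new input or output prefixes, so the absence of top-level prefixes in $\ProcessP^*_{\mathsf{nf}}$ transfers to $\Process^{**}$.

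The remaining task, which I expect to be the main obstacle, is to trace the reduction $\Process^* \red^* \Process^{**}$ and locate the step that consumes the send on $\ep\ChannelA\Polarity$. Inspecting Table~\ref{tab:reduction}, the only rule capable of removing a top-level output is \refrule{r-comm}, and it requires the simultaneous presence, at the same top level, of a peer input on $\ep\ChannelA{\co\Polarity}$. Isolating the first step of $\Process^* \red^* \Process^{**}$ that touches our send prefix, all preceding steps leave it intact and must therefore occur entirely within the companion subprocess (possibly after extruding some session scopes via \refrule{s-res 2}). Collecting them gives $\ProcessQ^* \red^* \new{\tilde\ChannelB}(\receive{\ep\ChannelA{\co\Polarity}}{\var}.\ProcessQ^{*\prime} \parop \ProcessR^*)$ with $\ChannelA \notin \tilde\ChannelB$, since otherwise the communication that closes the diagram would not be enabled. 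This is precisely the witness required by Definition~\ref{def:progress}. The delicate part of the argument is the structural-congruence bookkeeping that splits each intermediate reduction between the send-prefixed component and $\ProcessQ^*$, and that confirms $\tilde\ChannelB$ does not capture $\ChannelA$; the other ingredients (approximation, subject reduction, strong normalization, stability) slot together cleanly through the commuting diagram of the roadmap.
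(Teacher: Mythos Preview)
Your proposal is correct and follows essentially the same route as the paper: invoke Proposition~\ref{prop:approx_finite} to pass to a finite approximant, normalize it via strong normalization, apply subject reduction and Lemma~\ref{lem:stability} to rule out top-level prefixes in the normal form, lift back along $\asub$ using Proposition~\ref{prop:approx_red}, and then argue that the send prefix must have been consumed by an \refrule{r-comm} step whose preceding reductions lie entirely in the companion subprocess. Your final paragraph actually spells out more of the structural-congruence bookkeeping than the paper does, which simply asserts the concluding reduction of $\ProcessP_2$ without elaboration.
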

\ifproofs
\begin{proof}
  Consider a derivation of $\Process \red^* \Process'$ where
  $\Process' =
  \new{\tilde\ChannelA}(\send{\ep\ChannelA\PolarityP}{\ep\ChannelC\PolarityQ}.\ProcessP_1 \parop
  \ProcessP_2)$.
  By Proposition~\ref{prop:approx_finite} there exist $n$ and
  $\ProcessQ'$ such that $\approximate{\Process}{n} \red^* \ProcessQ'$
  where $\ProcessQ' \in \FiniteProcessSet$ and $\ProcessQ' \asub
  \ProcessP'$.
  By Corollary~\ref{cor:sn} there exists $\ProcessQ''$ such that
  $\ProcessQ' \red^* \ProcessQ'' \nred$.
  From the hypothesis $\wtp[0]{}{}{}{\approximate\Process{0}}$,
  Proposition~\ref{prop:approx_wt}, and Theorem~\ref{thm:sr} we deduce
  $\wtp[n]{}{}{}{\ProcessQ''}$.
  From Proposition~\ref{prop:approx_red} we deduce that there exists
  $\ProcessP''$ such that $\ProcessP' \red^* \ProcessP''$ and
  $\ProcessQ'' \asub \ProcessP''$.
  From Lemma~\ref{lem:stability} we deduce that $\ProcessQ''$ does not
  contain unguarded prefixes, hence the same holds for $\ProcessP''$.
  We conclude that $\ProcessP_2 \red^*
  \new{\tilde\ChannelB}(\receive{\ep\Channel{\co\Polarity}}{\var}.\ProcessP_2'
  \parop \ProcessQ) \red^* \ProcessP''$ where $\ChannelA$ does not
  occur in $\tilde\ChannelB$.
\end{proof}
\fi

%%% Local Variables: 
%%% mode: latex
%%% TeX-master: "main"
%%% End: 

\begin{example}[forwarder]
\label{ex:forwarder}
Consider the process $\Process \eqdef \Rec[\infty] \pvar.
\receive{\ep\ChannelA-}{\var}.  \send{\ep\ChannelB+}{\var}.  \pvar$
which repeatedly receives a message from endpoint $\ep\ChannelA-$ and
forwards it to endpoint $\ep\ChannelB+$. Below is a derivation showing
that $\approximate{\Process}{0}$ is well typed in an appropriate name
environment:
\[
\begin{prooftree}
  \[
  \[
  \[
  \justifies
  \wtp[0]{
    \WEnv
  }{
    \UEnv
  }{
    \ep\ChannelA- : \etvar,
    \ep\ChannelB+ : \etvar'
  }{
    \pvar
  }
  \using \refrule{t-var}
  \]
  \quad
  \evD < \obligation[\WEnv]{\EndpointTypeS}
  \quad
  \evD < \evA
  \justifies
  \wtp[0]{
    \WEnv
  }{
    \UEnv
  }{
    \ep\ChannelA- : \etvar,
    \ep\ChannelB+ :
    \Out[\evC,\evD]{}\EndpointTypeS.\etvar',
    \var : \EndpointTypeS
  }{
    \send{\ep\ChannelB+}{\var}.
    \pvar
  }
  \using \refrule{t-output}
  \]
  \quad
  \evB < \evC
  \justifies
  \wtp[0]{
    \WEnv
  }{
    \UEnv
  }{
    \ep\ChannelA- : \In[\evA,\evB]{}\EndpointTypeS.\etvar,
    \ep\ChannelB+ : \Out[\evC,\evD]{}\EndpointTypeS.\etvar'
  }{
    \receive{\ep\ChannelA-}{\var}.
    \send{\ep\ChannelB+}{\var}.
    \pvar
  }
  \using \refrule{t-input}
  \]
  \justifies
  \wtp[0]{\EmptyEnv}{\EmptyEnv}{
    \ep\ChannelA- : \trec[0]\etvar.\In[\evA,\evB]{}\EndpointTypeS.\etvar,
    \ep\ChannelB+ : \trec[0]\etvar'.\Out[\evC,\evD]{}\EndpointTypeS.\etvar'
  }{
    \Rec[0] \pvar.
    \receive{\ep\ChannelA-}{\var}.
    \send{\ep\ChannelB+}{\var}.
    \pvar
  }
  \using \refrule{t-rec}
\end{prooftree}
\]

In the derivation we let $\WEnv \eqdef \etvar : \evA, \etvar' :
\evC$ and $\UEnv \eqdef \pvar : \varassoc{ \ep\ChannelA- :
  \etvar, \ep\ChannelB+ : \etvar' }$. Note that the constraints over
priorities are satisfiable, taking for instance
$\obligation[\WEnv]{\EndpointTypeS} = \evA = \evC = 1$ and
$\evD = 0$.
The interested reader can then extend the derivation to show that
\[
  \new\ChannelA\new\ChannelB
  (\approximate{\Process}{0}
  \mid
   \Rec[0]\pvarY.
   \new\ChannelC
   \send{\ep\ChannelA+}{\ep\ChannelC+}.
   \pvarY
  \mid
   \Rec[0]\pvarZ.
   \receive{\ep\ChannelB-}{\varY}.
   \pvarZ)
\]
is well typed (for instance, by taking $\EndpointTypeS = \End$, the
environment $\ep\ChannelC+, \ep\ChannelC- : \End$ within the
restriction $\new\ChannelC$, and using \refrule{t-end} in two
strategic places to discharge these endpoints), concluding that any
process having this as $0$-approximant has progress
(Theorem~\ref{thm:progress}).
Incidentally, the same example also shows the importance of
associating \emph{two distinct priorities} to each action. If we were
associating one single priority to each action, which essentially
amounts to adding the constraints $\evA = \evB$ and
$\evC = \evD$, this process would be ill typed because of
the unsatisfiable chain of constraints $\evA = \evB <
\evC = \evD < \evA$.  \eoe
\end{example}

%%% Local Variables: 
%%% mode: latex
%%% TeX-master: "main"
%%% End: 

%%% Local Variables: 
%%% mode: latex
%%% TeX-master: "main"
%%% End: 

\section{Extensions}
\label{sec:extensions}

%\paragraph{Branches and basic data types.}
%
Both the calculus and the types can be easily extended to support
labeled messages and label-driven branching.
The type language can also be enriched with basic data types such as
numbers, boolean values, etc. These values are not subject to any
linearity constraint, so the $\obligation{\cdot}$ function can be
conservatively extended to basic types by returning $\infty$, meaning
that \refrule{t-output} does not require any constraint when sending
messages of such types.

%\paragraph{Communication model.}
%
For simplicity our calculus is synchronous, but the type system
applies with minimal changes also to asynchronous communication, which
is more relevant in practice. In particular, since in an asynchronous
communication model output operations are non-blocking, rule
\refrule{t-output} can avoid to enforce the sequentiality of the
action with respect to the use of other endpoints.

%\paragraph{Subtyping.}
%
Subtyping for session types has been widely studied
in~\cite{GayHole05,CastagnaDezaniGiachinoPadovani09,Padovani11b}. The
decorations that are necessary for enforcing progress allow a natural
form of subtyping, in accordance with the interpretation that a
channel with type $\EndpointTypeT$ can be safely used where a channel
with type $\EndpointTypeS$ is expected if $\EndpointTypeT
\subt \EndpointTypeS$ ($\EndpointTypeT$ is a subtype of
$\EndpointTypeS$). Indeed, by looking at the typing rules, it is clear
that obligations always occur on the right hand side of priority
constraints, while capabilities always occur on the left hand side of
these constraints. This means that subtyping can be covariant on
capabilities and contravariant on obligations. More precisely, the
core rules of subtyping would be formulated like this:
\[
\inferrule[\defrule{s-input}]{
  \timestampM' \leq \timestampM
  \\
  \timestampN \leq \timestampN'
  \\
  \EndpointTypeS \subt \EndpointTypeS'
  \\
  \EndpointTypeT \subt \EndpointTypeS'
}{
  \In[\timestampM, \timestampN]{}\EndpointTypeS.\EndpointTypeT
  \subt
  \In[\timestampM', \timestampN']{}{\EndpointTypeS'}.\EndpointTypeT'
}
\qquad
\inferrule[\defrule{s-output}]{
  \timestampM' \leq \timestampM
  \\
  \timestampN \leq \timestampN'
  \\
  \EndpointTypeS' \subt \EndpointTypeS
  \\
  \EndpointTypeT \subt \EndpointTypeS'
}{
  \Out[\timestampM, \timestampN]{}\EndpointTypeS.\EndpointTypeT
  \subt
  \Out[\timestampM', \timestampN']{}{\EndpointTypeS'}.\EndpointTypeT'
}
\]

%\paragraph{Shared channels.}
%
Most session type theories support shared channel types that can be
distributed non-linearly among processes. In~\cite{BonoPadovani12} it
was shown that shared channel types can be added with minimum effort
by introducing a simple asymmetry between \emph{service types}, which
have the form $\In[\timestampM, \infty]{}\EndpointTypeS$ and only
allow receiving messages of type $\EndpointTypeS$, and \emph{client
  types}, which have the form $\Out[\infty,
\timestampM]{}\EndpointTypeS$ and only allow sending messages of type
$\EndpointTypeS$.
Service endpoints must be used linearly like session endpoints, to
make sure that no message sent over a client endpoint is lost. On the
contrary, client endpoints can be safely shared between multiple
processes or even left unused. As a result of this asymmetry, service
endpoint types are given finite obligation and infinite capability
(meaning that the owner of a service endpoint must use the channel,
but is not guaranteed that it will receive any message from it), and
dually client endpoint types are given infinite obligation and finite
capability (meaning that the owner of a client endpoint may not use
the endpoint, but if it does then it has the guarantee that the
message will be eventually received).
Because there is no guarantee that a message is sent over a client
endpoint, the progress property (Definition~\ref{def:progress}) must
be relaxed by allowing processes guarded by input actions on service
endpoints.

%%% Local Variables: 
%%% mode: latex
%%% TeX-master: "main"
%%% End: 

%\input{related}

\section{Concluding Remarks}
\label{sec:conclusions}

By adapting the type system for lock freedom described
in~\cite{Kobayashi02} we have obtained a static analysis technique for
ensuring progress in a calculus of sessions that is more fine grained
than those described
in~\cite{DezaniDeLiguoroYoshida07,CONCUR08,CoppoDezaniPadovaniYoshida13b}. For
instance, the process shown in Example~\ref{ex:forwarder} is ill typed
according to the type systems
in~\cite{DezaniDeLiguoroYoshida07,CONCUR08,CoppoDezaniPadovaniYoshida13b}
where it is not allowed to delegate a received channel.  The increased
precision of the approach presented here comes from associating pairs
of priorities with each action in a session type, while in
\cite{CONCUR08,CoppoDezaniPadovaniYoshida13b} there is just one
priority associated with the shared name on which the session is
initiated.
Following the ideas presented by Kobayashi~\cite{Kobayashi02} and
adapted to sessions in the present work, Vieira and
Vasconcelos~\cite{VieiraVasconcelos13} have defined a similar type
system using abstract events instead of priorities, where events
represent the temporal order with which actions should be
performed. Their soundness result proves a weaker notion of progress,
but it should be possible to strengthen it along the lines of
Definition~\ref{def:progress}.

The aforementioned works can be classified as adopting a bottom-up
approach, in the sense that they aim at verifying a global property
(progress) of a compound system by checking properties of the system's
constituents (the sessions).
Other works adopt a top-down approach whereby well-typed or
well-formed systems have progress by design. For example, Carbone and
Montesi~\cite{CarboneMontesi13} advocate the use of a global
programming model for describing systems of communicating processes
such that, when the model is \emph{projected} into the constituent
processes, their parallel composition is guaranteed to enjoy progress.
Caires and Pfenning~\cite{CairesPfenning10} and subsequently
Wadler~\cite{Wadler12} present type systems such that well-typed terms
are deadlock-free. The result follows from the fact that the type
system prevents the same process to interleave actions pertaining to
different sessions.

% We can identify two weaknesses of the type system presented here. The
% first one concerns the expressiveness of the technique with respect to
% recursive processes. It is not hard to come up with recursive
% processes involving more than two sessions which do have progress, but
% are ill-typed. The problem seems to originate from the fact that each
% unfolding of a recursive type re-uses exactly the same events as the
% original, folded type. Intuitively, because each unfolding represents
% a fresh iteration of some protocol, the events occurring in it could
% in principle be refreshed. In doing so, however, one has to preserve
% consistency constraints induced by duality and endpoints delegated
% within the iteration. This approach is currently being investigated by
% Vieira and Vasconcelos.\footnote{Private communication.}
%
A weakness of the type system presented here is that the priority
constraints checked by rules~\refrule{t-input} and~\refrule{t-output}
imply the knowledge of \emph{every} endpoint used in the continuation
of a process that follows a blocking action. This is feasible as long
as processes are described as terms of an abstract calculus, but in a
concrete programming language, processes are typically decomposed into
functions, methods, objects, and modules. While type checking each of
these entities in isolation, the type checker has only a partial
knowledge about the possible continuations of the program, and of
which endpoints are going to be used therein. We think that, in order
for the approach to be applicable in practice, it is necessary to
further enrich the structure of types. We are currently investigating
this problem in a language with first-order functions and
communication primitives.

\paragraph{Acknowledgments.}
I am grateful to Ilaria Castellani, Joshua Guttman and Philip Wadler,
who encouraged me to reconsider (and eventually dismiss) the
inaccurate interpretation of obligations and capabilities as
timestamps that I used in an earlier version of this paper.

%%% Local Variables: 
%%% mode: latex
%%% TeX-master: "main"
%%% End: 

\providecommand{\doi}[1]{%
  \href{http://dx.doi.org/\detokenize{#1}}{\textsc{doi:} \nolinkurl{\detokenize{#1}}}%
}

\bibliographystyle{eptcs}
\bibliography{main}

\end{document}